\ifpdf\setlength{\pdfpagewidth}{8.5in}\setlength{\pdfpageheight}{11in}\fi
\newtheorem{theorem}{Theorem}
\newtheorem{lemma}[theorem]{Lemma}
\newtheorem{definition}{Definition}
\newtheorem{example}{Example}
\newcommand{\para}[1]{ \medskip \noindent {\bf #1}}
\newcommand{\eat}[1]{}
\newcommand{\A}{{\cal A}}
\newcommand{\D}{{\cal D}}
\newcommand{\X}{{\cal X}}
\newcommand{\eps}{{\varepsilon}}
\title{Mining Frequent Graph Patterns with Differential Privacy}
\author{Entong Shen\\
North Carolina State University\\
eshen@ncsu.edu
 \and Ting Yu\\
 North Carolina State University\\
tyu@ncsu.edu}
\date{}
\begin{document}
\maketitle

\begin{abstract}
	Discovering frequent graph patterns in a graph database offers valuable information
	in a variety of applications. However, if the graph dataset contains
	sensitive data of individuals such as mobile phone-call graphs and web-click graphs,
	releasing discovered frequent patterns may present a threat to the privacy of
	individuals.
	{\em Differential privacy} has recently emerged as the {\em de
	facto} standard for private data analysis due to its provable
	privacy guarantee.
	%Recent interest in privacy-preserving data mining has been
	%focusing on differential privacy, which has emerged as the {\em de
	%facto} standard for private data analysis.
	%As an important data mining task,
	%discovering frequent subgraphs in a graph dataset often
	%handles sensitive data such as mobile phone-call graphs and web-click graphs.
	In this paper we propose the first differentially private algorithm
	for mining frequent graph patterns.
	\vspace{0.5em}
	
	We first show that previous techniques on differentially private discovery of frequent
	{\em itemsets} cannot apply in mining frequent graph patterns due to the 
	inherent complexity of handling structural information in graphs.
	%unknown size of the output space. 
	We then address this challenge by proposing a Markov Chain Monte Carlo (MCMC)
	sampling based algorithm.
	Unlike previous work on frequent itemset mining,
	our techniques do not rely on the
	output of a non-private mining algorithm. 
	Instead, we observe that both frequent graph pattern mining and the guarantee of
	differential privacy can be unified
	into an MCMC sampling framework. In addition, we establish the privacy and
	utility guarantee of our algorithm
	and propose an efficient neighboring pattern counting technique as well.
	Experimental results show that the proposed algorithm is able to
	output frequent patterns with good precision.

\end{abstract}

\section{Introduction}\label{sec:intro}
Frequent graph pattern mining (FPM) is an important topic in data
mining research. It has been increasingly applied in a variety of application
domains such as bioinformatics, cheminformatics and social network
analysis.
Given a graph dataset $\D=\{D_1,D_2,\dots,D_n\}$, where each $D_i$ is a graph, 
let $gid(G)$ be the set of IDs of graphs
in $\D$ which contain $G$ as a subgraph.
$G$ is a frequent pattern if its count $|gid(G)|$ (also called {\em support})
is no less than a user-specified
{\em support threshold} $f$.
%A frequent graph pattern is a subgraph found in a collection
%of graphs with a count no less than a
%user-specified {\em support} threshold.
Frequent subgraphs can
help the discovery of common substructures, and
are the building blocks of further analysis, including graph classification,
clustering and indexing. 
For instance, discovering frequent patterns in social interaction graphs 
can be vital to understand functioning of
the society or dissemination of diseases.

Meanwhile, publishing frequent graph patterns may impose potential 
threat to privacy, if the graph dataset
contains sensitive information of individuals. In many applications, identities are
associated with individual graphs (rather than nodes or edges) which
are considered private. For example,
the click stream during
a browser session of a user is typically a sparse subgraph of the underlying
web graph;
in location-based services, a database may consist of a set of trajectories,
each of which corresponds to the locations of an individual in a given
period of time.
Other scenarios of frequent pattern mining with sensitive graphs may include
mobile phone call graphs \cite{nanavati2006structural} and
XML representation of profiles of individuals.
%Indeed, many
%compelling applications of frequent subgraph mining deal with sensitive
%graphs such as mobile phone call graphs \cite{nanavati2006structural},
%web user click streams \cite{gunduz2003web}
%or even sexual relationship graphs \cite{sexual-data}.
Therefore, extra care is needed when mining
and releasing frequent patterns in these graphs to prevent leakage of private information of individuals.

\eat{
or similar syntactic models \cite{Machanavajjhala06,ntv-tcpbk-07}
may leak sensitive information
when the anonymized data are combined with an adversary's background knowledge \cite{ganta2008composition}.
%A high-profile attack on privacy was demonstrated in \cite{narayanan2008robust},
%where a powerful and practical linking of outside sources reveals personal
%information in the Netflix Prize data set.
Since most subgraph mining algorithms are threshold-based and deterministic,
in the presence of
a strong attacker who can manipulate the input graphs, the attacker
is able to
infer arbitrary information about the target individual.
Below we show such an example:
\begin{example}\label{example1}
Consider a web click graph dataset and an attacker Alice
who is able to actively influence the input graphs to
FPM algorithm $\mathcal{A}$. Alice wants to find out whether the (unknown)
click graph of a user Bob contains a specific pattern $G$.
Suppose $\mathcal{A}$ has
a support threshold of $f$, then Alice can feed $\mathcal{A}$ with
$f-1$ fake graphs having pattern $G$ along with Bob's graph. By observing
whether the output of $\mathcal{A}$ contains $G$, Alice can infer whether
$G$ is contained in Bob's click graph.
\end{example}
}

It has been well recognized that simple anonymization schemes that only remove
obvious identifiers carry serious risks to privacy.
Even privacy-preserving graph mining techniques (e.g.\cite{Li2011})
based on $k$-anonymity \cite{sweeney02-2} are now often considered to offer
insufficient privacy under strong attack models.
Recently, the model of differential privacy \cite{Dwork2006a}
was proposed to restrict the inference of
private information even in the presence of a strong adversary.
% as in the above example.
%Differential privacy,
%which has gained considerable support in the database
%community, 
It requires that
the output of a differentially private algorithm is
nearly identical (in a probabilistic sense), whether or not a
participant contributes her data to the dataset.
%This is often achieved by adding noise to the dataset.
%This has many appealing features, one of which is that the output still
%exhibits useful information at an aggregated level; however, if one zooms
%into the individual level, what she sees is rather noisy.
For the problem of frequent graph mining, it means that even an adversary
who is able to actively influence the input graphs cannot infer whether a specific
pattern exists in a target graph.
Although tremendous progress has been made in processing {\em flat} data
(e.g. relational and transactional data)
in a differentially private manner,
there has been very few work on differentially private analysis of graph data,
due to the inherent complexity in handling the structural information in graphs.
%Due to the structural information in graphs, it is challenging to
%apply differential privacy to the problem of
%frequent subgraph mining.
%In particular, naively
%adding noise to the support of all possible patterns and
%selecting the patterns with the highest noisy supports would be infeasible,
%since it is computationally prohibitive 
%to enumerate all possible patterns
%in any non-trivial sized graph mining problem. 

In this paper we propose the first algorithm for privacy-preserving mining
of frequent graph patterns that guarantees differential privacy.
%There has been work
%\cite{mcsherry2009differentially, Friedman2010, machana2011personalized}
%discussing various data mining tasks in a differentially-private manner.
%Most notably, in \cite{Thakurta2010,li2012privbasis} techniques have been introduced to anonymize
%the outcome of frequent {\em itemset} mining under differential privacy.
Recently several techniques \cite{Thakurta2010,li2012privbasis}
have been proposed to publish frequent {\em itemsets} 
in a transactional database in a differentially private manner. 
It would seem attractive to adapt those techniques to 
address the problem of frequent {\em subgraph}\footnote{We use `graph pattern' and
`subgraph' interchangeably in this paper.} mining.
Unfortunately, compared with private frequent itemset mining, the 
private FPM problem imposes much more challenges. 
First, graph datasets do not have a set of well-defined dimensions (i.e.,{\em items}),
which is required by the techniques in \cite{li2012privbasis}.
Second, counting graph patterns is much more difficult than counting itemsets
(due to graph isomorphism), which makes the size of the output space not
immediately available in our problem. This prevents us from applying the techniques
in \cite{Thakurta2010}.
%While ostensibly similar, those techniques cannot be easily adapted
%Due to the inherent complexity of graphs, the size of the output space, which
%is essential in applying differential privacy for graph datasets,
%is not immediately available in our problem. 
We will explain the distinction
between \cite{Thakurta2010,li2012privbasis} and our work with more details
in Section~\ref{subsec:challenges}.

\para{Contributions.}
The major contributions of this paper are summarized as follows:
\begin{enumerate}
\item For the first time, we introduce a differentially private algorithm
for mining frequent patterns in a graph database. 
Our algorithm, called {\em Diff-FPM},
makes novel use of a Markov Chain Monte Carlo (MCMC) random walk method to
bypass the roadblock of an output space with unknown size. This enables us to
apply the {\em exponential mechanism}, which is a general approach to achieving
differential privacy.
%in our problem.
%is based on the technique of exponential mechanism \cite{McSherry2007}, which is
%an alternative and more general approach to achieving differential privacy
%than the Laplace mechanism \cite{Dwork2006a}. In previous works
%\cite{Thakurta2010,Mohammed2011,Cormode2012}
%the size of the output domain is essential to calculating the probability
%mass of any output, which unfortunately is hard to obtain for graph databases. 
%We propose a Markov Chain Monte Carlo (MCMC)
%sampling
%based algorithm to bypass the roadblock of unknown size of the output space
%in our problem.
Moreover, unlike \cite{Thakurta2010} that relies on the output of
a non-private itemset mining algorithm, our technique integrates the process
of graph mining and privacy protection as a whole. This is due to the
observation that both frequent pattern mining and the application of exponential mechanism
can be unified into an MCMC sampling framework.
\item Our approach provides provable privacy and utility guarantee on the output
of our algorithm. We first show that our algorithm gives $(\eps,\delta)$-differential
privacy, which is a relaxed version of $\eps$-differential privacy. 
%when a small distance between the distribution of the samples and the target distribution
%is present.
We then show that when the random walk has reached its steady state, {\em Diff-FPM}
gives $\eps$-differential privacy. For utility analysis,
because a private frequent graph mining algorithm usually does not output the exact answer,
we quantify the quality of our result by providing a high-probability upper
bound on how far the support of the reported patterns can be from 
the support threshold specified by the user.
\item The most costly operation in our algorithm is counting the support
of a pattern in the graph dataset, due to the fact that subgraph isomorphism
test is NP-complete. In order to propose more efficiently a neighboring pattern in MCMC
sampling, we develop optimization techniques that significantly reduce the
number of invocations to the subgraph isomorphism test subroutine. %In the experiment
%we show that the time to propose a neighboring pattern can %be reduced by ?
%orders of magnitude.
\item We conduct an extensive experimental study on the effectiveness and efficiency
of our algorithm. With moderate amount of privacy budget, {\em Diff-FPM}
is able to output
private frequent graph patterns with at least 80\% precision.
\end{enumerate}

The paper is organized as follows: 
The basic concept and techniques for differential privacy, as well as a formal
definition of the FPM problem are introduced
in Section~\ref{sec:prelim}.
Section~\ref{sec:algo} and Section~\ref{sec:detail} introduces our {\em Diff-FPM} algorithm, 
whose privacy and utility
analysis is provided in Section~\ref{sec:analysis}. 
%In Section~\ref{sec:detail} we
%introduce an important optimization to explore the neighbors of a pattern efficiently.
The experiment result is presented in
Section~\ref{sec:exp}. We review related work in Section~\ref{sec:related} and
Section~\ref{sec:conclusion} concludes our discussion.
\section{Preliminaries}\label{sec:prelim}

\subsection{Frequent Graph Pattern Mining}\label{subsec:formulate}
Frequent graph pattern mining (FPM) aims at discovering the subgraphs
that frequently appear in a graph dataset. Formally,
Let $\D=\{D_1,D_2,\dots,D_n\}$ be a sensitive graph database
which contains a multiset of graphs. Each graph $D_i\in \D$ has
a unique identifier. Let $G=(V,E)$ be a (sub)graph pattern, the graph
identifier set $gid(G)=\{i:G \subseteq D_i \in \D\}$ includes
all IDs of graphs in $\D$ that contain a subgraph isomorphic
to $G$. We call $|gid(G)|$ the \textit{support} of $G$ in $\D$.
The FPM algorithm can be defined either as returning all subgraph patterns whose supports
are no less than a user-specified threshold $f$, or as returning the top $k$ frequent
patterns given an integer $k$ as input. 
One can easily convert one version to the other.
%if $f$ is specified with a value of the (true) support of the $k$th frequent pattern,
%our algorithm will return the approximate top-$k$ frequent patterns.

All graphs we consider in this paper are undirected, connected and labeled.
Note that each node has a label and multiple nodes can have the same label.
Depending on the application, the patterns considered
may be subject to a set $\mathcal{R}$
of rules which are related to domain knowledge or user specifications.
It is common to place
an upper bound on the number of nodes and/or edges in the patterns, or specify
the set of possible labels.
%Any patterns which fail the
%rule set $\mathcal{R}$ should be excluded from the output space.
For example, if the graphs represent chemical compounds, a rule may require
the degree of a vertex labeled `{\em C(arbon)}' be no greater than 4. Another
rule may specify that any output contains at least 5 vertices,
in order to filter out some trivial patterns. 

Many non-private algorithms have been proposed for finding frequent subgraphs.
The most representative approaches include Apriori algorithm \cite{inokuchi2000apriori}
and the gSpan \cite{yan2002gspan} algorithm. The Apriori algorithm exploits the 
observation that if a graph pattern $G$ is frequent, all its subgraphs must also
be frequent. The algorithm works by exploring the search space, i.e., generating
candidate patterns and pruning infrequent ones. The gSpan algorithm 
maps each graph to a unique minimum DFS code,
which skips the candidate generation process. For a detailed review of graph pattern
mining and other related work, please refer to Section~\ref{sec:related}.

\subsection{Differential Privacy}\label{subsec:dp}
Differential privacy \cite{Dwork2006a} is a recent privacy model which provides strong
privacy guarantee. Informally, a data mining or publishing procedure
is differentially private if the outcome is insensitive to any particular
record in the dataset. In the context of graph pattern mining, let
$\D,\D'$ be two {\em neighboring datasets}, i.e., $\D$ and $\D'$ differ
in only one graph, written as $|| \D - \D' || = 1$. Let
$\D^n$ be the space of graph datasets containing $n$ graphs.
\begin{definition}[$\eps$-differential privacy]
\label{def:dp1}
A randomized algorithm $\A$ is $\eps$-differentially private if for all
neighboring datasets $\D$,$\D'$ $\in \D^n$, and any set of possible
output $\mathcal{O} \subset Range(\A)$:
$$\Pr[\A(\D)\in \mathcal{O}]\leq e^{\eps}\;\Pr[\A(\D')\in \mathcal{O}].$$
\end{definition}
The parameter $\eps>0$ allows us to control the level of privacy. A smaller
$\eps$ suggests more limit posed on the influence of a single graph.
Typically, the value of $\eps$ should be small ($\eps<1$). $\eps$ is
usually specified by the data owner and referred as the {\em privacy budget}.
In section \ref{subsec:privacy} our discussion is related to a weaker notion called
$(\eps,\delta)$-differential privacy \cite{Dwork2006}, which allows a small additive
error factor of $\delta$.
\begin{definition}[$(\eps,\delta)$-differential privacy]\label{def:dp2}
A randomized algorithm $\A$ is $(\eps,\delta)$-differential private if for all
neighboring datasets $\D$,$\D'$ $\in \D^n$, and any set of possible
output $\mathcal{O} \subset Range(\A)$:
$$\Pr[\A(\D)\in \mathcal{O}]\leq e^{\eps}\;\Pr[\A(\D')\in \mathcal{O}]+\delta.$$
\end{definition}
\para{Laplace Mechanism.} The most common technique for designing differentially
private algorithms is to add random noise to the true output of a function \cite{Dwork2006a}.
The noise is calibrated according to the {\em sensitivity} of the function,
which is defined as the maximum difference in the output for any neighboring
datasets. Formally,
\begin{definition}[Sensitivity]\label{def:sensitivity}
For any function $f : \D^n$$\to \mathbb{R}$, the sensitivity of $f$ is
$$\Delta f = \max_{\D,\D': \|\D-\D'\| = 1 } |f(\D)-f(\D')|.$$
\end{definition}
Given a dataset $\D$ and a numeric function $f$, the Laplace mechanism
achieves $\eps$-differential privacy by releasing $\tilde{f}(\D)=
f(\D)+Lap(\Delta f/\eps)$, where $Lap(\lambda)$ denotes a random
variable drawn from the Laplace distribution with mean of 0 and variance
of $2\lambda^2$.

Applying the Laplace mechanism requires the output of a function being numeric.
In many applications,
however, the output may be models, classifiers or graphs which contain
structural information that are not easily perturbed by the Laplace
mechanism.
Thus it cannot be directly applied to the problem of frequent subgraph mining.
Still, we can use this technique to report
the frequencies of the patterns we output.

\para{Exponential Mechanism.}
A more general technique of applying differential privacy
is the exponential mechanism \cite{McSherry2007}.
It not only
supports non-numeric output but also captures
the full class of differential privacy mechanisms. The exponential mechanism
considers the whole {\em output space} and assumes that each possible
output
is associated with a real-valued utility score. By sampling from a distribution where
the probability of the desired outputs are exponentially amplified, the exponential
mechanism (approximately) finds the desired outputs while ensuring differential privacy.

Formally, given input space $\D^n$ and output space $\X$,
a score function $u:\D^n \times \X \to \mathbb{R}$ assigns
each possible output $x \in \X$ a score $u(\D,x)$ based on the input $\D\in \D^n$.
The mechanism then draws a sample from the distribution on $\X$ which assigns
each $x$ a probability mass proportional to $\exp(\eps u(\D,x)/2\Delta u)$,
where $\Delta u=\max_{\forall x,\D,\D'}|u(\D,x)-u(\D',x)|$ is the sensitivity
of the score function. Intuitively, the output with a higher score is
exponentially more likely to be chosen. It is shown that
this mechanism satisfies $\eps$-differential privacy \cite{McSherry2007}.
\begin{theorem}\label{thm:exp-mech}
\cite{McSherry2007} Given a utility score function $u:\D^n \times \X \to \mathbb{R}$
for a dataset $\D$, the mechanism $\A$,
$$\A(\D,x)\triangleq \textrm{return $x$ with probability} \propto \exp(\frac{\eps u(\D,x)}{2\Delta u})$$
gives $\eps$-differential privacy.
\end{theorem}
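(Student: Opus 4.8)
The plan is to establish the pointwise privacy bound directly from the definition of the mechanism and then lift it to arbitrary output sets. First I would write the probability that $\A(\D)$ returns a fixed $x\in\X$ as the normalized weight
$$\Pr[\A(\D)=x] = \frac{\exp(\eps u(\D,x)/2\Delta u)}{\sum_{y\in\X}\exp(\eps u(\D,y)/2\Delta u)},$$
and form the ratio $\Pr[\A(\D)=x]/\Pr[\A(\D')=x]$ for neighboring $\D,\D'$. This ratio factors into two pieces: the ratio of the numerators for the fixed output $x$, and the (inverted) ratio of the two normalizing sums. I would bound each piece by $e^{\eps/2}$ separately, so that their product is the desired $e^{\eps}$.

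For the numerator factor, the exponent difference is $\eps(u(\D,x)-u(\D',x))/2\Delta u$, which by Definition~\ref{def:sensitivity} is at most $\eps\Delta u/2\Delta u=\eps/2$; hence the factor is at most $e^{\eps/2}$. For the normalization factor, the key observation is that the sensitivity bound holds \emph{term by term}: for every $y$ we have $\exp(\eps u(\D',y)/2\Delta u)\le e^{\eps/2}\exp(\eps u(\D,y)/2\Delta u)$, so summing over $y$ pulls the common factor $e^{\eps/2}$ outside the sum and yields $\sum_{y}\exp(\eps u(\D',y)/2\Delta u)\le e^{\eps/2}\sum_{y}\exp(\eps u(\D,y)/2\Delta u)$. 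Combining the two bounds gives $\Pr[\A(\D)=x]\le e^{\eps}\Pr[\A(\D')=x]$ for every $x$, and summing this inequality over any $\mathcal{O}\subset Range(\A)$ recovers Definition~\ref{def:dp1}.

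The step I expect to be the crux — and the reason the constant $2\Delta u$ rather than $\Delta u$ appears in the mechanism — is the normalization factor. The subtlety is that the partition function in the denominator itself depends on the dataset, so one cannot simply cancel it; instead the privacy budget $\eps$ must be split evenly, half spent controlling the change in the chosen output's weight and half controlling the change in the partition function. If the output space $\X$ is continuous rather than discrete, I would replace the sums by integrals against the base measure, but the term-wise inequality and the factoring argument carry over unchanged.
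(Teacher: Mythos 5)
Your argument is correct. Note, however, that the paper does not prove this theorem at all --- it is quoted from \cite{McSherry2007} as a known result --- so there is no internal proof to compare against; what you have written is precisely the canonical McSherry--Talwar derivation: factor the ratio $\Pr[\A(\D)=x]/\Pr[\A(\D')=x]$ into the numerator ratio and the inverted partition-function ratio, bound each by $e^{\eps/2}$ using the term-wise sensitivity inequality, and sum over $\mathcal{O}$. Your identification of the partition-function term as the reason for the factor $2\Delta u$ is also the right intuition. One cosmetic correction: the bound $|u(\D,x)-u(\D',x)|\le\Delta u$ comes from the definition $\Delta u=\max_{x,\D,\D'}|u(\D,x)-u(\D',x)|$ given in the paragraph preceding the theorem, not from Definition~\ref{def:sensitivity}, which defines sensitivity only for single-argument numeric functions $f:\D^n\to\mathbb{R}$.
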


The exponential mechanism has been shown to be a powerful technique in finding
private medians \cite{Cormode2012}, mining private frequent itemset 
\cite{Thakurta2010,li2012privbasis} and more
generally adapting
a deterministic algorithm to be differentially private \cite{Mohammed2011}. 
As discussed in Section~\ref{sec:intro}, it is infeasible to find 
frequent graph patterns privately using the Laplace mechanism by 
adding noise to the support of each possible pattern.
Our {\em Diff-FPM} algorithm works by carefully applying the exponential mechanism.
In this process we must overcome several critical challenges, which are identified
next.

\subsection{Challenges}\label{subsec:challenges}

There has been work \cite{Thakurta2010,li2012privbasis} on mining frequent 
{\em itemsets} in a transaction dataset
under differential privacy. 
%Two approaches based on the Laplace mechanism and the
%exponential mechanism respectively are proposed in \cite{Thakurta2010}.
However, the shift from
transactions to graphs poses significant new challenges, which
make the previous techniques no longer suitable in our problem. 
In \cite{li2012privbasis}, transaction datasets are viewed as high-dimensional
tabular data, and the proposed approach projects the input database onto lower
dimensions. However, graph datasets do not have a well defined set of {\em items},
i.e., dimensions, which renders the approach in \cite{li2012privbasis} inapplicable
in our FPM problem. 
In \cite{Thakurta2010}, two methods are proposed which make use of a notion of
truncated frequency. However, those methods cannot be used in our problem due
to the following fundamental challenges:

%\begin{itemize}
\para{Support Counting.} Obtaining the support of a graph pattern is 
much more difficult than counting itemsets. 
An itemset pattern
can be represented by an ordered list or a bitmap of item IDs and does not contain
structural information as in graphs. Checking the existence of an itemset
in a transaction only takes $O(1)$ time (after simple data
structures such as bitmaps have been built), while checking whether a subgraph pattern
exists in a graph is NP-complete due to subgraph isomorphism.

\para{Unknown Output Space.} The output space $\X$ in our problem contains a finite number of graph
patterns which may or may not exist in the input dataset.
Under differential privacy, any pattern in the output
space should have non-zero probability to be in the final output.
The knowledge of the output space is essential in applying the exponential mechanism,
in which we need to sample a pattern $x$ with probability
\begin{equation} \label{eq:exp-mech}
\pi(x) = \frac{\exp(\eps u(x)/2\Delta u)}{C},
\end{equation}
where $C=\sum_{x\in \X}\exp(\eps u(x)/2\Delta u)$ is the normalizing constant
according to Theorem~\ref{thm:exp-mech}.
The most straightforward way to compute $C$ requires enumerating all the patterns in
the output space.
In \cite{Thakurta2010}, a technique is proposed to apply the exponential mechanism
without enumerating if the size of the output space is known.
However, unlike \cite{Thakurta2010}, in which the output space size can
be obtained by simple combinatorics (i.e.,$\binom{m}{l}$ patterns of size
$l$ given an alphabet of size $m$), {\em the size of the output space $\X$ in our
problem is not immediately available (due to graph isomorphism\footnote{
Essentially, we need to answer the question 'Are there any closed form formula or
 polynomial time algorithms to count the number of graphs given the number of
 vertices, edges and a set of possible labels?'. 
 (1) If the vertex labels are all unique, we know the number of graphs given $n$ vertices is $2^{n(n-1)/2}$.
 (2) If the graph is unlabeled, the problem is considerably harder due to graph isomorphism. P\'{o}lya enumeration theorem provides an algorithm to compute the number of isomorphism classes of graphs with $n$ vertices and $m$ edges \cite{polya}. But it gives neither a formula nor a generating function.
 (3) When the labels are not unique, the problem is at least as hard as the unlabeled case.
})}, which prohibits
us from applying exponential mechanism directly. 
Therefore we cannot apply
the same techniques as in \cite{Thakurta2010}.
%, in which knowing the output space size is essential.

%\item The size of the output space is not immediately available for graph datasets.
%In \cite{Thakurta2010} the number of possible patterns can be easily found by simple combinatorics
%as $\binom{m}{l}$, where $m$ is the size of the alphabet and $l$ is the size
%of the itemset. Finding the number of unique graphs given a set of nodes, edges
%and labels (vertex labels may not be unique) is intrinsically hard without enumeration, again due to
%graph isomorphism. 
%Note that the size $|\X|$ of the output space is essential in
%applying the approaches in \cite{Thakurta2010}: in the exponential mechanism based method,
%$|\X|$ is required in calculating the exact probability mass of each possible
%output.
%; in the method using Laplace mechanism, $|\X|$
%enables sampling without enumerating every pattern. 
Given the analysis above, we need to develop
new ways to overcome the issue of an unknown $|\X|$.
Note that although the {\em global} information on the output space is not accessible,
we do have the {\em local} information on any specific pattern -- given any pattern
$x$, we can immediately calculate its utility score $u(x)$ (related to $|gid(x)|$,
see Section~\ref{sec:algo} for details). In addition, the unknown normalizing
constant $C$ is common to all patterns.
That is, given any pair of patterns $x_1,x_2$, the ratio of probability mass
$\pi(x_1)/\pi(x_2)$ is available without knowing the exact probabilities,
according to Eq.(\ref{eq:exp-mech}).
Such scenarios, where one needs to
draw samples from a probability distribution known up to a constant factor,
also arise in statistical physics when analyzing dynamic systems, where
Markov Chain Monte Carlo (MCMC) methods are often used.
Inspired by that,
our idea is to perform a random walk based on locally computed probabilities.
By carefully choosing the neighbor and the probability
of moving in each step using the Metropolis-Hastings (MH) method \cite{rubinstein2008simulation}, 
the random walk will converge to the target distribution, from which
we can output samples.
Next we discuss the details of our {\em Diff-FPM} algorithm.

\section{Private FPM Algorithm}\label{sec:algo}

\subsection{Overview}\label{sec:overview}

The key challenge of handling graph datasets is the
unknown output space when applying the exponential mechanism. 
The {\em Diff-FPM} algorithm meets the challenge by 
unifying frequent pattern mining and applying differential
privacy into an MCMC sampling framework. 
The main idea of {\em Diff-FPM} is to simulate a Markov chain by 
performing an MCMC random walk in the output space.
Our goal is that when the random walk reaches its steady state,
the stationary distribution of the Markov chain matches the
target distribution $\pi$ in Eq.(\ref{eq:exp-mech}).
In Section~\ref{sec:MH} we will explain in detail how to apply the
Metropolis-Hastings (MH) method in our problem to achieve
this goal. 
Before that, we need to define the state space in which we perform the random walk. 

%In this section we discuss the details of our {\em Diff-FPM}
%algorithm. At a higher level,
%rather than adding Laplacian noise to the intermediate mining results \cite{mcsherry2009differentially}
%or post-processing the output from a non-private mining algorithm \cite{Thakurta2010},
%{\em Diff-FPM} unifies frequent pattern mining and applying differential
%privacy into an MCMC sampling framework. 
%It applies the exponential mechanism
%via the Metropoplis-Hastings method at the same time of exploring the search
%space and finding frequent patterns. Next we introduce the fundamentals
%of the MH method.

\para{Partial Order Full Graph.} To facilitate the MH-based random walk in
the output space, we define the Partial Order Full Graph (POFG) as
the state space of the Markov chain on which the sampling algorithm
run the simulation. Each node in
POFG corresponds to a unique graph pattern and each edge in POFG represents
a possible `extension' (add or remove one edge) to a neighboring pattern.
Naturally, each node in the POFG has three types of neighbors: {\em sub-neighbor}
(by removing an edge), {\em super-backward neighbor} (by connecting two existing
nodes) and {\em super-forward neighbor} (by adding and connecting to a new node).

\begin{example}
Figure~\ref{fig:example} shows a simple graph dataset containing 3 graphs
and its POFG. The dashed patterns have support smaller than 2 in the dataset.
Pattern $A-A-C$ has two sub-neighbors, one super-backward neighbor and several
super-forward neighbors (only one shown in Figure~\ref{fig:POFG}). Self-loops
and multi-edges are not considered in this example and thus are excluded from
the output space.
\end{example}

At a higher level, 
the random walk starts with an arbitrary pattern and proceeds to an adjacent 
pattern with certain probability in each step.
Since the transition decision is made
solely based on local information, there is no need to construct the global POFG
explicitly. 
When the random walk has reached its steady state, the probability of being
in state $x$ follows exactly the target distribution $\pi(x)$ in Eq.(\ref{eq:exp-mech}). 
Then the current state is drawn as a sampled pattern. Since the frequent patterns
have larger probabilities in the target distribution, they are more likely to appear
in the final output.

Before introducing the details,
we need to make sure that the random walk on POFG we design
indeed converges to a stationary distribution.
%\begin{lemma}\label{thm:converge}
%The random walk on POFG converges to a stationary distribution.
%\end{lemma}
A random walk needs to be finite, irreducible, and aperiodic to converge to
a stationary distribution \cite{rubinstein2008simulation}.
The analysis is similar to that in \cite{AlHasan2009}.

\subsection{Detailed Descriptions}

\subsubsection{Backgrounds on Markov Chain}
\label{sec:appx-markov}
A Markov chain is a discrete-time stochastic process defined over a set of
states $\X$. $\X$ can be finite or countably infinite.
The Markov property
requires that given the present state, the past and the future are independent.
The stochastic process is characterized by the \textit{transition matrix} $P$,
which defines the probability of transition between any two states in $\X$,
i.e., $P(x,y)$ is the probability that the next state will be $y$, given that
the current state is $x$. For all $x,y \in \X$, we have $0\leq P(x,y)\leq 1$,
and $\sum_{y}P(x,y)=1$, i.e., $P$ is row-stochastic.

A stationary distribution of a Markov chain with transition probability $P$
is a probability distribution $\pi$ (a row vector of size $|\X|$), such that
$\pi = \pi P$.

%If a Markov chain has certain properties (see Lemma %\ref{thm:converge}),
If a Markov chain is finite, irreducible and aperiodic,
regardless of where it begins, the chain will converge to the
stationary distribution. We also say it has reached the {\em steady state}
when the chain has converged.

If the state space $\X$ of a Markov chain is the set $\mathcal{V}$ of 
a graph $\mathcal{G}=(\mathcal{E},\mathcal{V})$,
and if for any $u,v\in \mathcal{V}$, $(u,v)\notin \mathcal{E}$ implies $P(u,v)=0$, then the process
is also called a {\em random walk} on the graph $\mathcal{G}$. In other words, transitions
only occur between adjacent nodes. 
%Equipped with the knowledge of Markov chain,
%we are ready to introduce the MH algorithm below.

\subsubsection{Applying the MH method}
\label{sec:MH}
\begin{figure}[t]
\centering
\subfigure[Graph database with 3 graphs]{\label{fig:database}
\includegraphics[width=0.3\columnwidth]{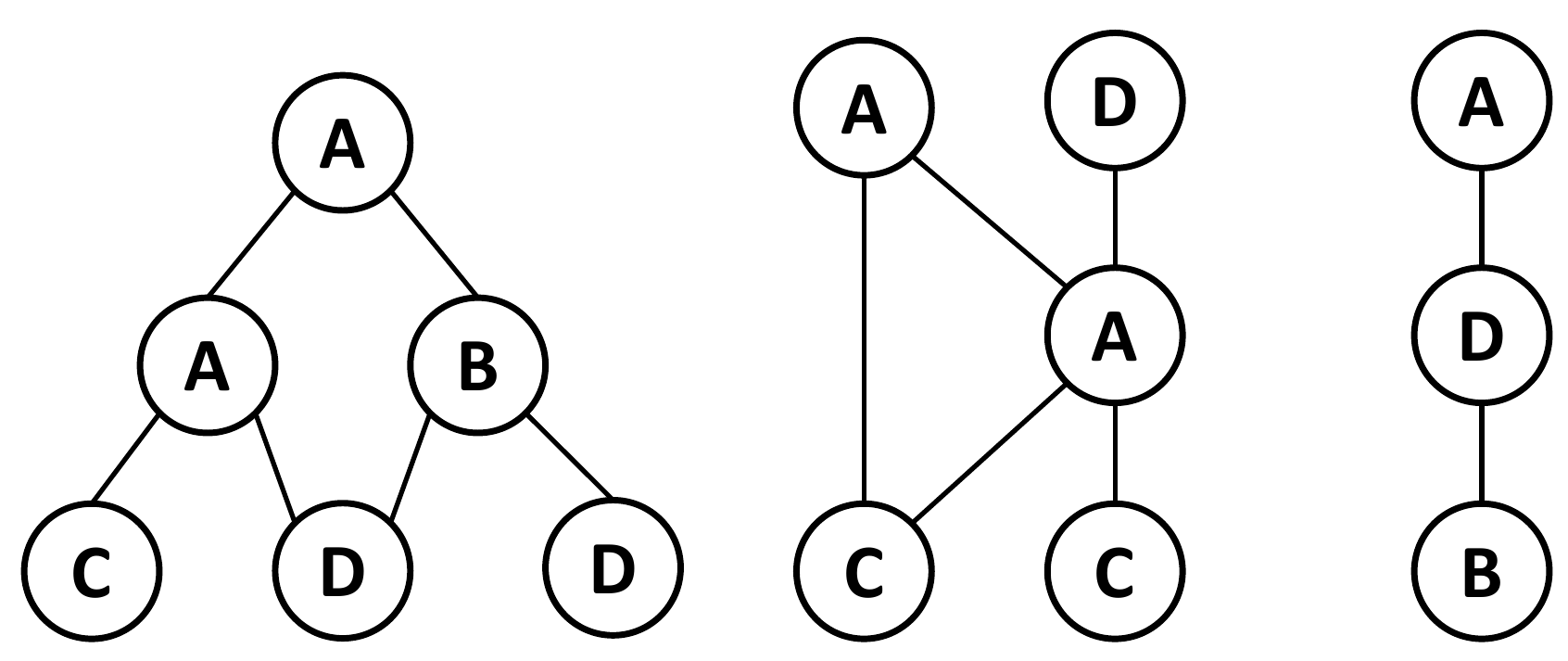}
}\\
\subfigure[Part of POFG of Figure~\ref{fig:database}]{\label{fig:POFG}
\includegraphics[width=0.5\columnwidth]{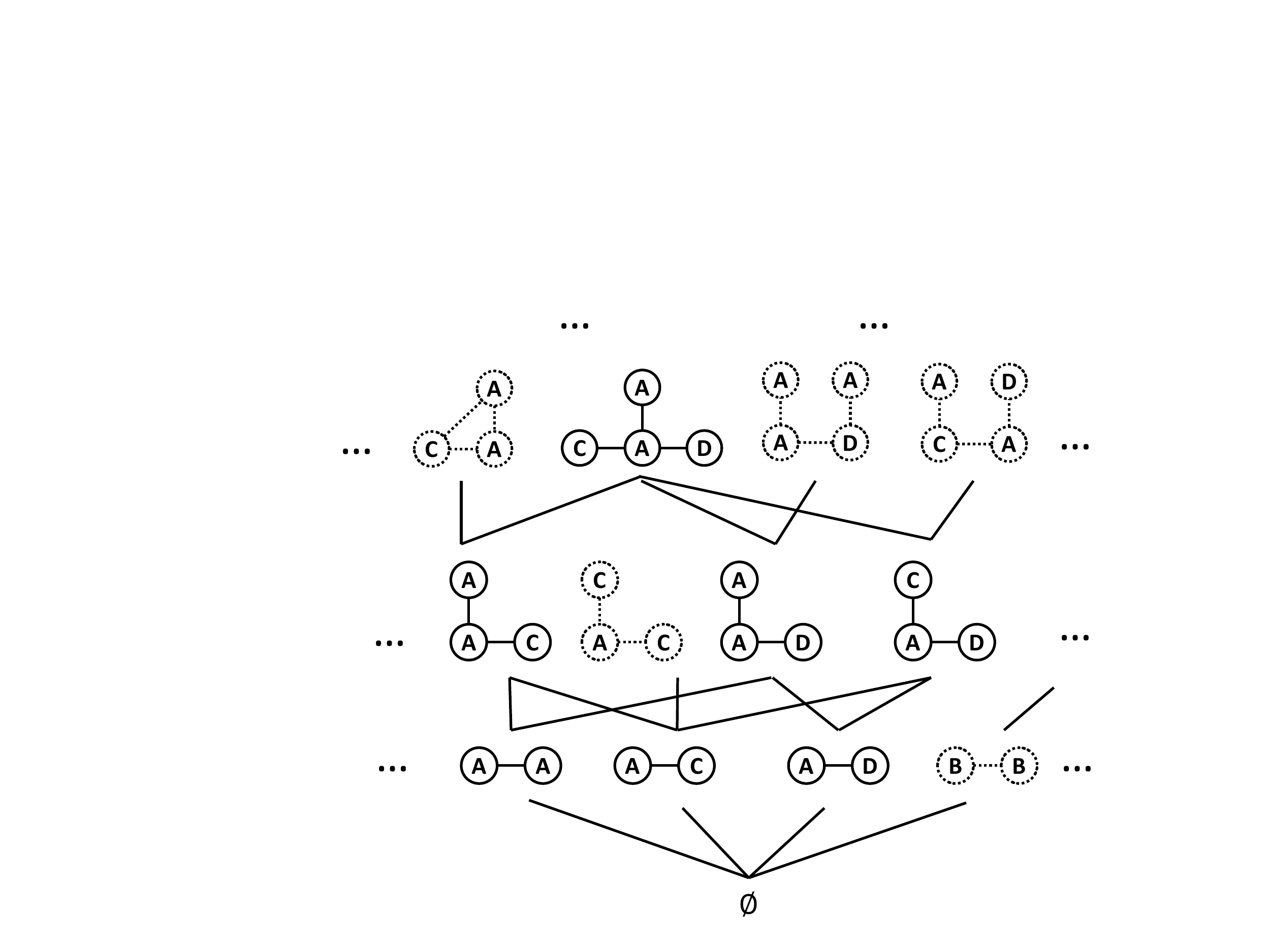}
}
\caption{Example graph database and POFG}
\label{fig:example}
\end{figure}

The MH method is a
Markov Chain Monte Carlo (MCMC) method for obtaining a sequence of random samples
from a target probability distribution for which direct sampling is difficult.
It only requires that a function proportional to the probability mass be calculable.
The main idea of the MH method is to simulate a Markov chain such
that the stationary distribution of the chain matches the target distribution
\cite{rubinstein2008simulation}.

Suppose we want to generate a random
variable $X$ taking values in $\X=\{x_1,\dots,x_{|\X|}\}$, according
to a target distribution $\pi$, with
\begin{equation}
\pi(x_i)=\frac{b(x_i)}{C},\qquad x_i \in \X \nonumber
\end{equation}
where all $b(x_i)$ are strictly positive, $|\X|$ is large, and the normalizing
constant $C=\sum_{i=1}^{|\X|}b(x_i)$ is difficult to calculate. The MH method
first constructs an $|\X|$-state Markov chain $\{X_t,t=0,1,\dots\}$ on
$\X$ whose evolution relies on an arbitrary proposal transition
matrix $Q=\big(q(x,y)\big)$ in the following way:
\begin{itemize}
\item When $X_t=x$, generate a random variable $Y$ satisfying $P(Y=y)=q(x,y)$,
$y\in \X$
\item If $Y=y$, let
\begin{equation}
X_{t+1} = \left\{
\begin{array}{rl}
y & \text{ with probability } \alpha_{xy},\\
x & \text{ with probability } 1-\alpha_{xy},
\end{array} \right.\nonumber
\end{equation}
\end{itemize}
where $\alpha_{xy}=\min\bigg\{\frac{\pi(y)q(y,x)}{\pi(x)q(x,y)},1\bigg\}=\min\bigg\{\frac{b(y)q(y,x)}{b(x)q(x,y)},1\bigg\}$.
It means that given a current state $x$, the next state is proposed according
to the proposal distribution $Q$. $q(x,y)$ is the probability mass of state
$y$ among all possible states given the current state is $x$. With probability
$\alpha_{xy}$, the proposal is accepted and the chain moves to the new state
$y$. Otherwise it remains at state $x$. It follows that $\{X_t,t=0,1,\dots\}$
has a one-step transition probability matrix $P$:
\begin{equation}
P(x,y)= \left\{
\begin{array}{ll}
q(x,y)\alpha_{xy}, & \text{if } x\neq y\\
1-\sum_{z\neq x}q(x,z)\alpha_{xz}, & \text{if } x=y
\end{array} \right.\nonumber
\end{equation}

It can be shown that for the above $P$, the Markov chain is reversible
and has a stationary distribution $\pi$, equal to the target distribution.
Therefore, once the chain has reached the steady state, the sequence of samples
we get from the MH method should follow the target distribution.
Next we use an example to explain how the state transition works in our 
{\em Diff-FPM} algorithm.

\begin{example}
Consider a random walk on the POFG illustrated in Figure~\ref{fig:POFG}.
Suppose the current state of the walk is `A-A-D' (pattern $x$). 
Following the MH method, one of pattern $x$'s neighbors needs to be proposed according
to a proposal distribution $q(x,y)$. For simplicity, in this example each
neighbor has an equal probability to be proposed, i.e., $q(x,y)=1/|N(x)|$,
where $N(x)$ is the neighbor set of $x$. Assuming `A-D' (pattern y) is proposed and
$|N(x)|=5$, $|N(y)|=10$, $b(\cdot)=\exp(|gid(\cdot)|/2)$,
the probability of accepting the proposal is calculated as
$\alpha_{xy}=$$\min\bigg\{\frac{\exp(3/2)\cdot (1/10)}{\exp(2/2)\cdot (1/5)},1\bigg\}$
$=0.82$. We can then draw a random number between 0 and 1 to decide whether walking
to pattern $y$ or staying at $x$.
\end{example}

The ability to generate a sample without knowing the normalizing constant
of proportionality is a major virtue of the MH method.
%A salient feature of the MH method is that it only
%requires a function proportional to the density being calculated instead
%of the exact density.
%When the size of the output space is not available
%($|\X|$ unknown), it is impossible to calculate the normalizing factor $C$,
%preventing us from direct sampling. Hence 
This salient feature 
fits perfectly the scenario when direct application of the exponential
mechanism is formidable due to an unmanageable output space.

\SetKwInOut{Input}{input}\SetKwInOut{Output}{output}
\begin{algorithm}[t]
\LinesNumbered
\caption{{\em Diff-FPM} algorithm}
\Input{Graph data set $\D$, support threshold $f$, privacy budget $\eps_1,\eps_2$}
\Output{A set $S$ of $k$ private frequent patterns with noisy supports}
\For{$i=1 ~\textrm{to} ~k$}{
    Choose any pattern in the output space as the initial pattern\;
    \While{True}{
        Propose a neighboring pattern $y$ of current pattern $x$ according to the proposal distribution (Eq. \ref{eqn:prop})\;
        Accept the proposed pattern with probability $\alpha_{xy}=\min\big\{\frac{\exp(\eps_1 u(y)/2k\Delta u)q_{yx}}{\exp(\eps_1 u(x)/2k\Delta u)q_{xy}},1\big\}$\;
        \If{convergence conditions are met}{
             Add current pattern to $S$ and remove it from the output space\;
            \textbf{break}\;
        }
    }
}
(Optional) for each pattern in $S$, perturb its true support by Laplace mechanism with privacy budget $\eps_2/k$\;
\end{algorithm}

%\begin{algorithm}[t]
%\caption{Diff-FPM algorithm}
%\begin{algorithmic}[1]
%\STATE \textbf{input:} Graph data set $\D$, support threshold $f_K$, privacy budget $\eps$.
%\STATE \textbf{output:} A set $S$ of $K$ private frequent patterns with noisy supports
%\FOR {$i=1 ~\textrm{to} ~K$}
%    \STATE Choose any pattern in the output space as the initial pattern;
%    \WHILE {True}
%        \STATE With the current pattern $x$, propose a neighboring pattern $y$ according to the proposal distribution (Eq. ?);\\
%        \STATE Accept the proposed pattern with probability
%        $\alpha_{xy}=\min\big\{\frac{\exp(\eps|gid(y)|/4K)q_{yx}}{\exp(\eps|gid(x)|/4K)q_{xy}},1\big\}$ ;
%        \IF{convergence condition is satisfied}
%            \STATE Add current pattern to $S$ and remove it from the output space;
%            \STATE\textbf{break}
%        \ENDIF
%    \ENDWHILE
%\ENDFOR
%\STATE For each pattern in $S$, perturb its true support using Laplace mechanism with privacy budget $\eps/2K$.
%\end{algorithmic}
%\end{algorithm}

The description of the {\em Diff-FPM} algorithm above can be summarized in Algorithm 1.
The input consists of the raw graph dataset $\D$, a support threshold $f$ and
the privacy budget $\eps = \eps_1 + \eps_2$. If the top-$k$ frequent patterns are desired,
we first run non-private FPM algorithms such as gSpan \cite{yan2002gspan} to get the
support threshold $f$, i.e., the support of the $k$th frequent
pattern. If one only needs $k$ patterns whose supports are no less than
a threshold, $f$ can be directly provided to the algorithm. At a higher level,
Algorithm 1 consists of two phases: sampling and perturbation. The sampling phase
includes $k$ applications of the exponential mechanism via
MH-based random walk in the output space.

Initially, we select an arbitrary pattern
in the output space to start the walk (Line 2). At each step, we propose a
neighboring pattern $y$ of the current pattern $x$
according to a {\em proposal distribution} (Line 4).
The proposal distribution does not affect the correctness of the MH method,
so we defer the details to Section~\ref{sec:proposal} (it does affect the speed of convergence
though). The proposed pattern is then accepted with probability $\alpha_{xy}$ as
in the MH-algorithm (Line 5), where $u(\cdot)$ is the score function with $\Delta u$
being the sensitivity of $u(\cdot)$. We explore the design space of the score function
in the next paragraph. When the Markov chain has converged
(see Section~\ref{sec:converge} for convergence diagnostic),
we output the current pattern and remove it from the output space (Line 6 to 8).
We then start a new walk until $k$ patterns have been sampled. Finally, if
one wants to include the support of each output pattern as well,
the count of each pattern is perturbed by adding $Lap(k/\eps_2)$ noise (Line 9).

\subsubsection{Score Function Design} 
Choosing the utility score function is
vital in our approach as it directly affects the target distribution.
A general guideline is that the patterns with higher supports should have higher
utility scores in order to have larger probabilities to
be chosen according to exponential mechanism.
Under this guideline, given an input database $\D$, the most straightforward choice is to
let $u(x,\D)=|gid(x)|$ for any pattern $x$. In this case, the sensitivity $\Delta u$
is exactly 1 since the support of any subgraph pattern may vary by at most 1
with the addition or removal of a graph in the dataset. Other choices include
assigning the same utility scores to all patterns having supports no less than $f$,
or deliberately lowering the scores of the infrequent patterns. For example,
let $u(x)=a(|gid(x)|-b)$ if $|gid(x)|<f$, where $0<a<1$, $b>0$, and $u(x)=|gid(x)|$
if $|gid(x)|\geq f$. In this case, the infrequent patterns have even less
probability to be sampled. However, this will also increase $\Delta u$
and thus deteriorate the utility, according to Theorem~\ref{thm:exp-mech}.
We will further study the impact of various score functions in the experiment section.

\subsubsection{Proposal Distribution}\label{sec:proposal}
Although in theory the proposal distribution can be
arbitrary, it can essentially impact the efficiency of the MH method
by affecting the mixing time (time to reach steady state).
A good proposal distribution can improve the convergence speed by
increasing the accept rate $\alpha_{xy}$ in the MH method.
On the contrary, if the proposed pattern is often rejected, the chain can
hardly move forward.
It has been suggested that
one should choose a proposal distribution close to the target distribution \cite{gilks1996markov}.
In our problem setting, it is preferable to make a distinction between
the patterns having support
no less than $f$ (referred as {\em frequent} patterns)
and those whose supports are lower (referred as {\em infrequent} patterns).
Given a current state $x$, we denote the set of {\em frequent} neighbors
of $x$ as $N_1(x)$ and the set of {\em infrequent} neighbors as $N_2(x)$.
Since $|N_2(x)|$ is usually larger than $|N_1(x)|$, we will balance the
probability mass assigned to $N_1(x)$ and $N_2(x)$ by introducing a tunable
parameter $\eta$. For the same reason, we use $\rho$ to control the
bias toward either the sub-neighbors $N_1^{b}(x)$ or the super-neighbors
$N_1^{p}(x)$ within the desired set $N_1(x)$. Our heuristic based
proposal distribution is formally described below:
\begin{equation}
\label{eqn:prop}
Q(x,y)= \left\{
\begin{array}{ll}
\rho\eta\times \frac{1}{|N_1^{b}(x)|}, & \text{if } y\in N_1^{b}(x)\\
(1-\rho)\eta\times \frac{1}{|N_1^{p}(x)|}, & \text{if } y\in N_1^{p}(x)\\
(1-\eta)\times \frac{1}{|N_2(x)|}, & \text{if } y\in N_2(x)
\end{array} \right.
\end{equation}
The best values of $\eta$ and $\rho$ can be tuned experimentally.
If any of the three sets of neighbors in Eq.\ref{eqn:prop} is empty, its
probability mass will be re-distributed (by setting $\rho=0$, $\rho=1$ and $\eta=1$ respectively).

\subsubsection{Pattern Removal} 
In line 6 to 8 of Algorithm 1, after the convergence
conditions are met and a sample pattern $g$ is outputted, we need to exclude
$g$ from the output space by connecting $g$'s neighbors and removing
$g$ in the POFG. In our implementation this is done by replacing $g$ by all the
neighbors of $g$ whenever $g$ appears in some pattern's neighborhood.
Note that we do not output multiple patterns when the chain has converged.
This is because once a pattern is sampled, it should be excluded from the
output space and thus have zero probability to be chosen. Therefore
adjustment to the output space is necessary after each sample. For the same
reason we do not run multiple chains at once.

\subsection{Convergence Diagnostics}\label{sec:converge}
The theory of MCMC sampling requires that samples are drawn when the Markov
chain has converged to the stationary distribution, which is also our target
distribution $\pi$. The most straightforward way to diagnose convergence is to
monitor the distance between the target distribution $\pi$ and
the distribution of samples $\hat{\pi}$.
In practice, however, $\pi$ is often known only up to a constant factor. To deal with this problem, several online
diagnostic tests have been developed in the MCMC literature \cite{gilks1996markov}
and used in random walk based sampling on graphs \cite{gjoka2010walking}.

Online diagnostics rely on detecting whether the chain has lost its
dependence on the starting point. In particular, two standard convergence
tests Geweke diagnostic \cite{geweke1992evaluating} and Gelman and
Rubin diagnostic \cite{gelman1992inference} are commonly used,
which are based on analysis of intra-chain and inter-chain properties respectively.
Since our problem setting does not support running multiple chains
at the same time, we will focus on the Geweke diagnostic.

The Geweke diagnostic takes two non-overlapping parts
(usually the first 0.1 and last 0.5 proportions) of the Markov chain and
compares the means of both parts to see
if they are from the same distribution.
Specifically, let $X$ be a sequence of samples of our metric of interest
and $X_1,X_2$ be the two non-overlapping subsequences. Geweke computes the
$Z$-score: $Z=\frac{E(X_1)-E(X_2)}{\sqrt{Var(X_1)+Var(X_2)}}$. With increasing
number of iterations, $X_1$ and $X_2$ should move further apart and become less
and less correlated. When the chain has converged, $X_1$ and
$X_2$ should be identically distributed
with $Z\sim N(0,1)$ by law of large numbers. We can declare convergence
when $Z$ has continuously fallen in the $[-1,1]$ range. Since the samples
in our problem are graph patterns rather than a scalar, we may need to monitor
multiple scalar metrics related to different properties of the sampled pattern
and declare convergence when all these metrics have converged.

We need to acknowledge that these convergence diagnostic tools from the MCMC literature are
heuristic {\em per se}. Verifying the convergence
remains an open problem if the distribution of samples is not directly observable.
Even so, {\em Diff-FPM} still achieves $(\eps,\delta)$-differential privacy if there exists
a small distance between the target and simulation distributions, as we will show in
Lemma~\ref{thm:error} in Section~\ref{sec:analysis}.

\section{Efficient Exploration of Neighbors (EEN)}\label{sec:detail} \label{sec:een}
We have discussed so far the core of the {\em Diff-FPM} algorithm
and seemingly it could be run straightforwardly. However,
without certain optimization, the computation cost % (mostly due to 
%subgraph isomorphism test) 
might render the algorithm
impractical to run.
The most costly operation in the {\em Diff-FPM} algorithm is proposing a neighbor
of the current pattern $x$. According to the proposal distribution in Eq.\ref{eqn:prop},
this requires knowledge on the support of each pattern in $x$'s neighborhood.
Due to the fact that subgraph isomorphism test is NP-complete,
obtaining the support of each neighbor might become a computation bottleneck. 
To overcome this problem, we have developed an efficient algorithm (called EEN),
%to explore the neighborhood of a pattern by observing the connection between
%neighboring patterns and their isomorphic mappings. The EEN algorithm 
which aims at
minimizing the
number of invocations to the subgraph isomorphism test subroutine.
Experimental result in Section~\ref{sec:exp} shows that the time cost per iteration
can be reduced by up to an order of magnitude using this optimization. 

\subsection{Problem Formulation}
In order to propose a neighbor $y$ of a pattern $x$ according to the proposal
distribution, we need
to investigate the neighbor set $N(x)$ of $x$ and test the frequentness of each neighbor 
$y\in N(x)$.
The task of neighbors exploration can be described as: given a pattern $x$,
find the set of frequent
sub-neighbors $N_1^b(x)$, frequent super-neighbors $N_1^p(x)$ and
infrequent neighbors $N_2(x)$, as introduced in the proposal distribution (see Eq.\ref{eqn:prop}).

The neighbor set $N(x)$ is composed of two parts - super-neighbors
$N^{p}(x)$ and sub-neighbors $N^{b}(x)$.
A pattern $y$ is a super-neighbor of $x$ if $y=x\diamond e$
and $x\subset y$ (we use $\subset$ to denote subgraph relationship), 
where $e$ is a new edge and $\diamond$ is an extension operation.
If $e$ connects two existing nodes in $x$, it is called a back edge. Otherwise,
a new node is created with a random label from a label set $\mathcal{L}$ and
then connected to an existing node in $x$. In this case the new edge is called a forward edge.
Thus $N^{p}(x)=N_{back}^p(x)\cup N_{fwd}^p(x)$, where $N_{back}^p(x)$ and $N_{fwd}^p(x)$
are the sets of super-back and super-forward neighbors of $x$ respectively.

Similarly, pattern $y$ is a sub-neighbor of $x$ if $x=y\diamond e$ and $y\subset x$.
There are two types of edge removals as well.
Back edge removal removes an edge and keeps the remaining pattern connected with no vertex
removed, while forward edge removal isolates exactly one vertex which is also removed
from the resulting pattern. The above neighbors generation process ensures the
random walk is reversible (which is sufficient for the chain to have a stationary distribution), i.e., for any neighboring patterns $x$ and $y$,
if there is a walk from $x$ to $y$, $y$ can also walk to $x$ and vice versa.

\begin{algorithm}[ht!]
\LinesNumbered
\SetAlgoVlined
%\SetCommentSty{\small}
\SetKwFunction{FnSubFreq}{\sc sub\_is\_freq}
\SetKwData{True}{true}
\SetKwData{False}{false}
\caption{The EEN algorithm}
\Input{Pattern $x$, graph dataset $\D$, support threshold $f$}
\Output{$N_1^b(x)$,$N_1^p(x)$,$N_2(x)$}
Initialize $N_1^b$,$N_1^p$,$N_2\gets \emptyset$ ($x$ omitted for brevity)\;
Find membership bitmap $B_x$ using VF2 isomorphism test\;
Populate sub-neighbors $N^b$, super-back neighbors $N_{back}^p$, super-forward neighbors $N_{fwd}^p$\;
%\vspace{0.5em}
%line(1,0){230}\\
\tcc{\small Explore sub-neighbors $N^b$}
\lIf{$sum(B_x)\geq f$}{$N_1^b\gets N_1^b\cup N^b$\;}
\lElse{
    \For{$x'\in N^b$}{
        \lIf{{\sc sub\_is\_freq} $(x',B_x)$}{$N_1^b\gets N_1^b \cup \{x'\}$\;}
        \lElse{$N_2\gets N_2 \cup \{x'\}$\;}
    }
}
%\vspace{0.5em}
%\line(1,0){230}\\
\tcc{\small Explore super-back neighbors $N_{back}^p$}
\lIf{$sum(B_x)< f$}{$N_2\gets N_2\cup N^p_{back}$\;}
\Else{
    $\forall x'\in N^p_{back}$, initialize dictionary $H[x']=0$\;
    \For{$i\gets 1$ to $|\D|$}{
        Find set $\mathcal{M}$ of all mappings between $D_i$ and $x$\;
        \For{$x'\in N^p_{back}$}{
            \If{$H[x']<f$ and $|\D|-i+H[x']\geq f$}{
                Let $(u,v)$ be the back edge, i.e., $x=x'\diamond(u,v)$\;
                \For{$m\in \mathcal{M}$}{
                    \If{$m(u),m(v)$ are adjacent in $D_i$}{
                    $H[x']\gets H[x']+1$\;
                    break\;
                    }}}}}
    \For{$x'\in N^p_{back}$}{
        \lIf{$H[x']\geq f$}{$N_1^b\gets N_1^b\cup \{x'\}$\;}
        \lElse{$N_2\gets N_2\cup \{x'\}$\;}
        }
    }
%\vspace{0.5em}
%\line(1,0){230}\\
\tcc{\small Explore super-forward neighbors $N_{fwd}^p$}
\lIf{$sum(B_x)< f$}{$N_2\gets N_2\cup N^p_{fwd}$\;}
\Else{
    $\forall x'\in N^p_{fwd}$, initialize dictionary $H[x']=0$\;
    \For{$i\gets 1$ to $|\D|$}{
        Find set $\mathcal{M}$ of all mappings between $D_i$ and $x$\;
        \For{$x'\in N^p_{fwd}$}{
            \If{$H[x']<f$ and $|\D|-i+H[x']\geq f$}{
                Let $(u,v)$ be the forward edge, i.e., $x'=x\diamond(u,v)$ and $v\in x',v\notin x$\;
                \For{$m\in \mathcal{M}$}{
                    \If{$\exists w\in V_{D_i}$ s.t. $(w,m(u))\in E_{D_i}$, $l(w)=l(v), w\notin m(V_{x})$}{
                    $H[x']\gets H[x']+1$\;
                    break\;
                    }}}}}
    \For{$x'\in N^p_{fwd}$}{
        \lIf{$H[x']\geq f$}{$N_1^p\gets N_1^p\cup \{x'\}$\;}
        \lElse{$N_2\gets N_2\cup \{x'\}$\;}
        }
    }
%\vspace{0.5em}
\Return $N_1^p, N_1^b, N_2$\;
%\vspace{-0.5em}
\line(1,0){230}\\
\textbf{function  }{\FnSubFreq{$x',B_x$}{\\
    \Indp $B\gets \bigcap_{e\in x'}B_{e}$\;
    $C\gets \{i|i\in~B\backslash B_x, ~x'\subset D_i, D_i\in \D\}$\;
    \lIf{$|B_x|+|C|\geq f$}{\Return $true$\;}
    \lElse{\Return $false$\;}
}}
        
\end{algorithm}

\subsection{The EEN Algorithm}\label{sec:een-details}
%Counting the supports of a pattern's neighbors in a graph dataset
%is the most costly operation in our algorithm, due to the fact that subgraph
%isomorphism test is NP-complete. 
A naive way to populate $N_1^b(x)$,$N_1^p(x)$ and $N_2(x)$ is to test each neighbor
of $x$ against the graph dataset $\D$. However, this is extremely inefficient
since $|N(x)|\cdot|\D|$ isomorphism tests are required, where $|\D|$ is the number
of graphs in $\D$.
A simple optimization would be using the monotonic property of frequent patterns:
if $x$ is a frequent pattern, any subgraph of $x$ should be frequent too; likewise,
an infrequent pattern's super-graph must be infrequent.
However, the naive method is still required for exploring $N^p(x)$ if $x$
is frequent or $N^b(x)$ if $x$ is infrequent. 

The EEN algorithm is able to further optimize the number of isomorphism tests.
Observing that $x$ and $y$ only differ in one edge for all $y\in N(x)$,
the main idea is to re-use the isomorphic mappings
between $x$ and $D_i \in \D$ and examine whether any of the isomorphic mappings
can be retained after extending an edge.
The EEN algorithm is formally presented in Algorithm 2 and is described
in the following.
%Instead, we introduce a much more efficient algorithm to propose a neighbor
%pattern in Algorithm 2, which aims at minimizing the number of calls to the subgraph isomorphism test routine.

Algorithm 2 takes pattern $x$, graph dataset $\D$ and support threshold 
$f$ as input and returns $N_1^{b}(x)$, $N_1^{p}(x)$ and $N_2(x)$.
First, pattern $x$ is tested against each graph in
$\D$ and the result is stored in $B_x=\{i|x\subset D_i, D_i \in \D\}$,
which is the set of IDs of graphs containing pattern $x$ (line 2).
%a bitmap indicating subgraph membership (line 2).
The subgraph isomorphism algorithm we use is the VF2 algorithm \cite{cordella2004sub}.
Next we populate three types of neighbors of $x$:
sub-neighbors $N^b$, super-back neighbors $N_{back}^p$ and
super-forward neighbors $N_{fwd}^p$ (line 3), and handle them differently.

\para{Explore sub-neighbors (line 4 to 7).}
For $N^b$ , if $x$ is frequent, the entire set $N^b$ should
be frequent. If $x$ is infrequent, each pattern in $N^b$ is examined by 
the boolean sub-procedure {\sc Sub\_is\_freq} (line 40 to 44). 
%Here we describe the procedure {\sc sub\_is\_freq}($x',B_x$). 
%Let bitmap $B_x$ store the IDs of graphs of which $x$ is a subgraph,
{\sc Sub\_is\_freq} takes a sub-neighbor $x'$ of $x$ and $B_x$ as input
and returns the frequentness of $x'$.
%$B_{x'}$ can be computed in the following way. 
First we find $B_E=\bigcap_{e\in x'}B_{e}$,
the intersection of ID sets of all edges in pattern $x'$. 
Then subgraph isomorphism
test is only needed for the graphs $D_i\in B_E\backslash B_x$.
The set $C$ of IDs of graphs that succeed the test together with $B_x$ comprise $B_{x'}$.
Finally the procedure returns the frequentness of $x'$
by comparing $f$ and the size of $B_{x'}$.

\para{Explore super-back neighbors (line 8 to 22).}
For $N_{back}^p$, 
if $x$ is infrequent, the entire $N_{back}^p$ must be infrequent.
Otherwise, 
we test whether $x'\in N_{back}^p$ is a subgraph of $D_i$ for each $D_i$.
In this part, the EEN algorithm does not require any additional
subgraph isomorphism test at all.
This is achieved by re-using the isomorphism mappings between the base pattern $x$
and $D_i$ and reasoning upon that.
In line 12 we find the subgraph isomorphism mappings 
$\mathcal{M}:V_x^n \rightarrow V_{D_i}^n$, which can be obtained at the same time
when computing $B_x$ in line 2.
%between $x$ and $D_i$ ($n=|x|$):
Suppose $x$ is extended to $x'$ by connecting node $u$ and $v$ (line 15). 
If any of the isomorphism mappings $m\in \mathcal{M}$
is preserved with the edge extension (i.e., $m(u)$ and $m(v)$ are adjacent in $D_i$),
then $x'$ must be a subgraph of $D_i$. Otherwise if none of the mappings can be
preserved, $x'$ is not a subgraph of $D_i$.

In the above process, we use a dictionary $H$ to keep track of the number
of graphs in $\D$ so far which contains $x'$ as a subgraph, i.e., $H[x']$ maintains
$|\{D_i|x'\subset D_i\}|$ for the $D_i$ tested so far.
Line 14 ensures that the isomorphism extension
test is only performed when $H[x']$ has not and is able to reach $f$.

\para{Explore super-forward neighbors (line 23 to 37).}
For $N_{fwd}^p$, the algorithm is
similar to the procedures of exploring super-back neighbors, 
except that the extension test is now on a forward
edge instead of a back edge.
Specifically, let $v$ be the new node extended from $u$ (line 30),
if there exists a node $w\in D_i$ satisfying 1) has the same label as $u$;
2) is adjacent to $m(u)$; and 3) is not part of the mapping $m$,
then the isomorphism can be extended, meaning $x'\subset D_i$.

\section{Privacy and Utility Analysis}\label{sec:analysis}
\subsection{Privacy Analysis}\label{subsec:privacy}
In this part we establish the privacy guarantee of {\em Diff-FPM} described above.
We show both the sampling and perturbation phases preserve privacy, and then
we use the composition property of differential privacy to show the privacy guarantee of the overall algorithm.

In the sampling phase, our target probability distribution
$\pi(\D,\cdot)$ equals $\frac{\exp(\eps_1 u(\D,\cdot)/2k\Delta u)}{C}$ for a given dataset $\D$.
If samples were drawn directly from this distribution, it would
achieve strict $\frac{\eps_1}{k}$-differential privacy due to the exponential mechanism.
Since we use MCMC based sampling, the distribution of the samples
$\hat{\pi}(\D,\cdot)$
will {\em approximate} $\pi(\D,\cdot)$, i.e. the two distributions are
{\em asymptotically } identical. In real simulation, there may be a small distance between the two distributions. To quantify the impact on privacy when
a small error is present, we use the {\em total variation distance} \cite{rubinstein2008simulation}
to measure the distance of the two distributions at a given time:
\begin{equation}
\label{def:tv}
|| \hat{\pi}(\cdot)-\pi(\cdot)||_{TV} \equiv \max_{T\subset\X}|\hat{\pi}(T)-\pi(T)|
\end{equation}
which is the largest possible difference between the probabilities that
$\pi(\cdot)$ and $\hat{\pi}(\cdot)$ can assign to the same event.

Let $\mathcal{A(\D)}$ denote the process of sampling one pattern
according to Algorithm 1 (Line 4 to 10).
The privacy guarantee that $\mathcal{A}(\D)$ offers is described by the following lemma:
\begin{lemma}
\label{thm:error}
Let $\pi(\cdot)$ and $\hat{\pi}(\cdot)$ denote the target distribution and
the distribution of samples from $\mathcal{A}(\D)$ respectively.
Suppose $|| \hat{\pi}(\cdot)-\pi(\cdot)||_{TV} \leq \theta$, procedure
$\mathcal{A}(\D)$ gives
$(\frac{\eps_1}{k},\delta)$-differential privacy, where $\delta=\theta(1+e^{\eps_1/k})$.
\end{lemma}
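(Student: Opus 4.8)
The plan is to bound the probability that the approximate sampler $\mathcal{A}(\D)$ assigns to any output event $\mathcal{O}$ in terms of what the ideal exponential-mechanism sampler (drawing exactly from $\pi(\D,\cdot)$) would assign, paying a small additive price controlled by the total variation distance $\theta$. The key observation is that we already know the \emph{exact} distribution $\pi(\D,\cdot) \propto \exp(\eps_1 u(\D,\cdot)/2k\Delta u)$ satisfies $\frac{\eps_1}{k}$-differential privacy: this is exactly Theorem~\ref{thm:exp-mech} applied with privacy budget $\eps_1/k$. So for any neighboring $\D,\D'$ and any $\mathcal{O}\subset\X$ we have $\pi(\D,\mathcal{O}) \leq e^{\eps_1/k}\,\pi(\D',\mathcal{O})$. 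The burden of the proof is to transfer this guarantee from $\pi$ to $\hat{\pi}$.

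First I would use the definition of total variation distance in Eq.~(\ref{def:tv}) to sandwich $\hat{\pi}$ around $\pi$: for any event $\mathcal{O}$ and any dataset, $|\hat{\pi}(\mathcal{O}) - \pi(\mathcal{O})| \leq \theta$, so $\pi(\D,\mathcal{O}) - \theta \leq \hat{\pi}(\D,\mathcal{O}) \leq \pi(\D,\mathcal{O}) + \theta$, and the same two-sided bound holds for $\D'$. The chain of inequalities I would then assemble starts from the upper bound on $\hat{\pi}(\D,\mathcal{O})$, replaces $\pi(\D,\mathcal{O})$ by $e^{\eps_1/k}\pi(\D',\mathcal{O})$ using the exponential-mechanism guarantee, and finally replaces $\pi(\D',\mathcal{O})$ by its relation to $\hat{\pi}(\D',\mathcal{O})$. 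Schematically,
\begin{equation}
\hat{\pi}(\D,\mathcal{O}) \leq \pi(\D,\mathcal{O}) + \theta \leq e^{\eps_1/k}\pi(\D',\mathcal{O}) + \theta \leq e^{\eps_1/k}\big(\hat{\pi}(\D',\mathcal{O}) + \theta\big) + \theta. \nonumber
\end{equation}
Collecting the additive terms yields $\hat{\pi}(\D,\mathcal{O}) \leq e^{\eps_1/k}\,\hat{\pi}(\D',\mathcal{O}) + \theta(1 + e^{\eps_1/k})$, which is precisely the claimed $(\frac{\eps_1}{k},\delta)$-differential privacy with $\delta = \theta(1+e^{\eps_1/k})$.

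One point I would be careful about is that both the ideal distribution $\pi(\D,\cdot)$ and the simulated distribution $\hat{\pi}(\D,\cdot)$ depend on the input dataset, so the $\theta$-bound must be understood to hold uniformly for every dataset in $\D^n$ (in particular for both $\D$ and $\D'$). I would state this as a hypothesis of the argument, since the lemma's premise $\|\hat{\pi}(\cdot)-\pi(\cdot)\|_{TV}\le\theta$ is naturally read as a bound that the convergence diagnostics are designed to enforce regardless of the underlying data. The main obstacle is therefore not the algebra — which is the short three-step sandwich above — but rather justifying this uniformity and confirming that the exponential-mechanism privacy level is $\eps_1/k$ rather than $\eps_1$; the factor of $k$ in the exponent $\exp(\eps_1 u/2k\Delta u)$ is exactly what makes each of the $k$ sampling rounds cost $\eps_1/k$, so that composition over the $k$ rounds later recovers a total budget of $\eps_1$. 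I would flag that downstream composition as the reason for the $1/k$ scaling but keep it outside this lemma, whose scope is a single sampled pattern.
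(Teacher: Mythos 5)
Your proof is correct and follows essentially the same three-step sandwich as the paper: bound $\hat{\pi}(\D,\cdot)$ above by $\pi(\D,\cdot)+\theta$ via total variation, apply the exponential-mechanism guarantee $\pi(\D,\cdot)\le e^{\eps_1/k}\pi(\D',\cdot)$, then bound $\pi(\D',\cdot)$ by $\hat{\pi}(\D',\cdot)+\theta$, collecting the additive terms into $\delta=\theta(1+e^{\eps_1/k})$. The only (minor, and arguably favorable) difference is that you work directly with events $\mathcal{O}$, matching Definition~\ref{def:dp2}, whereas the paper states the chain pointwise in $x$.
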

\begin{proof}
$\forall x \in \X$, the ratio of density at $x$ for two neighboring input
$\D$ and $\D'$ can be bounded as
\begin{align}
\frac{\hat{\pi}(\D,x)}{\hat{\pi}(\D',x)} &\leq \frac{\pi(\D,x)+\theta}{\hat{\pi}(\D',x)}\nonumber\\ %\qquad
%&(|\hat{\pi}(\D,x)-\pi(\D,x)| \leq \theta  \\
&\leq \frac{\pi(\D',x)\cdot e^{\eps_1/k}+\theta}{\hat{\pi}(\D',x)}\nonumber\\% \qquad
%&(\textrm{due to \ref{def:exp-mech}}) \\
&\leq \frac{\Big(\theta+\hat{\pi}(\D',x)\Big)e^{\eps_1/k}+\theta}{\hat{\pi}(\D',x)}\nonumber\\% \qquad
%&(|\hat{\pi}(\D',x)-\pi(\D',x)| \leq \theta \textrm{  due to \ref{def:tv}}) \\
&= e^{\eps_1/k} + \frac{\theta(1+e^{\eps_1/k})}{\hat{\pi}(\D',x)}\nonumber
\end{align}
Therefore,
\begin{equation}
\hat{\pi}(\D,x)\leq e^{\eps_1/k} \hat{\pi}(\D',x) + \theta(1+e^{\eps_1/k})\nonumber
\end{equation}
giving $\Big(\frac{\eps_1}{k},\theta(1+e^{\eps_1/k})\Big)$-differential privacy.
\end{proof}

Note that $\theta$ is a function of simulation time $t$. The following lemma
describes the asymptotic behavior and the speed of convergence of the chain :
\begin{lemma}\label{thm:asympto}
\cite{rubinstein2008simulation}
If a Markov chain on a finite state space is irreducible and aperiodic, and has
a transition kernel $P$ and stationary distribution $\pi(\cdot)$, then for
$x\in \X$,
\begin{equation}
||P^t(x,\cdot) - \pi(\cdot)||_{TV}  \leq M\rho^t, \qquad t = 1,2,3,\dots
\end{equation}
for some $\rho<1$ and $M<\infty$. And
\begin{equation}
\lim_{t\to\infty}||P^t(x,\cdot) - \pi(\cdot)||_{TV} = 0
\end{equation}
\end{lemma}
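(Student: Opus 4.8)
The statement is the classical geometric ergodicity theorem for finite Markov chains, so the plan is to exhibit a uniform block contraction of the total variation distance and then iterate it. The central observation is that a finite, irreducible, aperiodic chain is \emph{primitive}: there exists an integer $N$ such that $P^N(x,y)>0$ for every pair $x,y\in\X$. First I would establish this, since it is the only place where both hypotheses are used together. Irreducibility guarantees that for each ordered pair $(x,y)$ some power of $P$ has a positive $(x,y)$ entry, and aperiodicity (the period of each state being $1$) lets a standard number-theoretic argument conclude that, for all sufficiently large $t$, $P^t(x,y)>0$; taking the maximum over the finitely many pairs yields a single uniform $N$.

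Once primitivity is in hand, set $\alpha(y)=\min_{x}P^N(x,y)>0$ and $\beta=\sum_{y}\alpha(y)\in(0,1]$; this is a Doeblin minorization $P^N(x,\cdot)\ge\alpha(\cdot)$ valid for every starting state $x$. The key step is the Dobrushin contraction: for any two distributions $\mu,\nu$ on $\X$, $\|\mu P^N-\nu P^N\|_{TV}\le(1-\beta)\,\|\mu-\nu\|_{TV}$, which follows because the shared minorant mass $\alpha$ cancels in the difference. Applying this with $\nu=\pi$ (so that $\nu P^N=\pi$) and $\mu=\delta_x$ the point mass at $x$, and iterating $k$ times, gives $\|P^{kN}(x,\cdot)-\pi\|_{TV}\le(1-\beta)^k$. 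To pass from the subsequence $kN$ to an arbitrary $t$, I would use that a single application of $P$ never increases the distance to the fixed point $\pi$ (again Dobrushin, with coefficient $\le 1$, together with $\pi P=\pi$). Writing $t=kN+r$ with $0\le r<N$ then yields $\|P^t(x,\cdot)-\pi\|_{TV}\le(1-\beta)^{k}\le M\rho^{t}$ with $\rho=(1-\beta)^{1/N}<1$ and $M=(1-\beta)^{-1}<\infty$, which is exactly the claimed bound; the limit statement is immediate since $\rho<1$.

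The main obstacle is the primitivity step: turning the qualitative facts ``irreducible'' and ``aperiodic'' into the quantitative all-entries-positive power $P^N$, which relies on the set of return times to a state being closed under addition and having gcd one, so that it contains every sufficiently large integer. Everything after that is soft. An alternative route is spectral: by Perron--Frobenius the stochastic matrix $P^N$ (being strictly positive) has $1$ as a simple dominant eigenvalue and all other eigenvalues of modulus $\rho^\ast<1$, and the rank-one spectral projection onto eigenvalue $1$ is precisely $\mathbf{1}\pi$, so $P^t-\mathbf{1}\pi$ decays like $(\rho^\ast)^t$. I would avoid this path, however, because $P$ need not be diagonalizable; defective eigenvalues introduce polynomial-in-$t$ factors from the Jordan blocks that must be absorbed by enlarging $\rho^\ast$ slightly to some $\rho<1$, an annoyance the Doeblin argument sidesteps entirely.
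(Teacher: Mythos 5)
The paper offers no proof of this lemma at all --- it is imported verbatim as a known result from the Rubinstein--Kroese textbook --- so there is nothing of the authors' own to compare against; what you have written is a correct, self-contained proof of the cited theorem. Your route is the standard Doeblin/Dobrushin one and every step checks out: primitivity of $P$ follows from irreducibility plus the numerical-semigroup argument on return times; the minorization $P^N(x,\cdot)\ge\alpha(\cdot)$ with $\beta=\sum_y\alpha(y)>0$ gives the block contraction $\|\mu P^N-\nu P^N\|_{TV}\le(1-\beta)\|\mu-\nu\|_{TV}$ because the common component $\bigl(\sum_x(\mu(x)-\nu(x))\bigr)\alpha(\cdot)$ vanishes and the residual kernel is stochastic; taking $\nu=\pi$ and iterating, then interpolating with the fact that a single step of $P$ is nonexpansive toward its fixed point, yields $M\rho^t$ with $\rho=(1-\beta)^{1/N}$ and $M=(1-\beta)^{-1}$, and the limit statement follows. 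One cosmetic gap: in the degenerate case $\beta=1$ (all rows of $P^N$ identical) your constants $\rho=0$, $M=\infty$ are ill-formed, but there the chain is exactly stationary after $N$ steps and any $\rho<1$ with $M=\rho^{-N}$ covers $t<N$, so nothing substantive breaks. Your reason for avoiding the Perron--Frobenius route (Jordan blocks forcing polynomial corrections) is also well taken.
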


The theorem above means $\theta$ is decreasing at least at a geometric speed
and approximates to zero when the simulation is running long enough.

Since the sampling process in Algorithm 1 consists of $k$ successive
applications of exponential mechanism based on random walk,
we need the following well-known composition lemma to provide privacy guarantee for the
entire sampling phase.
\begin{lemma}~\cite{mcsherry2009differentially}\label{thm:epscomp}
Let $\A_1,\ldots,\A_t$ be $t$ algorithms such that $\A_i$ satisfies
$\eps_i$-differential privacy, $1\leq i\leq t.$ Then their
sequential composition $\langle\A_1,\ldots,\A_t\rangle$ satisfies
$\eps$-differential privacy, for
$\eps = \sum_{i=1}^t\eps_i$.
\end{lemma}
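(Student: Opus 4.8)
The plan is to prove the multiplicative bound for a single output sequence first and then lift it to arbitrary output sets by summation and integration. Fix two neighboring datasets $\D, \D' \in \D^n$ with $\|\D - \D'\| = 1$, and write the composed mechanism $g = \langle \A_1, \ldots, \A_t\rangle$ so that its output is the tuple $(o_1, \ldots, o_t)$, where $o_i$ is the output of $\A_i$. I would treat the composition adaptively, allowing $\A_i$ to take the previous outputs $o_1, \ldots, o_{i-1}$ as auxiliary input in addition to the dataset. This matches the situation in Algorithm 1, where the output space for the $i$-th random walk depends on the patterns sampled in the first $i-1$ walks via pattern removal, so the non-adaptive factorization would not suffice on its own.

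First I would factorize the probability (density) of observing a fixed sequence using the product structure of the composition:
$$\Pr[g(\D) = (o_1, \ldots, o_t)] = \prod_{i=1}^t \Pr[\A_i(\D, o_1, \ldots, o_{i-1}) = o_i].$$
Then, for each factor, I hold the prefix $(o_1, \ldots, o_{i-1})$ fixed and invoke the hypothesis that $\A_i(\cdot, o_1, \ldots, o_{i-1})$ is $\eps_i$-differentially private as a function of its dataset argument, which gives $\Pr[\A_i(\D, o_1, \ldots, o_{i-1}) = o_i] \leq e^{\eps_i}\, \Pr[\A_i(\D', o_1, \ldots, o_{i-1}) = o_i]$. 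Multiplying these $t$ inequalities collects the exponents into a single factor $e^{\sum_i \eps_i} = e^{\eps}$, so that $\Pr[g(\D) = (o_1, \ldots, o_t)] \leq e^{\eps}\,\Pr[g(\D') = (o_1, \ldots, o_t)]$.

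Finally I would extend from single sequences to an arbitrary output set $\mathcal{O} \subset Range(g)$ by summing (over the discrete pattern components) and integrating (over the continuous Laplace-perturbed supports of Line 9) the pointwise bound over $\mathcal{O}$. Since the factor $e^{\eps}$ is uniform across all sequences, it passes outside the sum/integral, yielding $\Pr[g(\D) \in \mathcal{O}] \leq e^{\eps}\,\Pr[g(\D') \in \mathcal{O}]$, which is exactly $\eps$-differential privacy for $\eps = \sum_{i=1}^t \eps_i$.

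The main obstacle is justifying the factorization in the adaptive regime: each $\A_i$'s guarantee is stated for a fixed mechanism, so I must argue that conditioning on the realized prefix yields, for every fixed history, a mechanism that is still $\eps_i$-differentially private in the dataset, and that these conditional guarantees chain correctly through the product. An equivalent and perhaps cleaner route is induction on $t$: the base case $t=1$ is immediate, and the inductive step composes $\langle\A_1,\ldots,\A_{t-1}\rangle$ (which is $\big(\sum_{i<t}\eps_i\big)$-differentially private by the inductive hypothesis) with $\A_t$ using the same conditioning argument for the last coordinate. Either way, the only genuine care needed is the measure-theoretic bookkeeping over the mixed discrete/continuous output space so that the product-measure factorization and the final summation/integration are valid.
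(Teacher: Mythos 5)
The paper does not prove this lemma; it is quoted as a known result from \cite{mcsherry2009differentially}, so there is no in-paper argument to compare against. Your proof is the standard and correct one: factor the probability of a fixed output sequence via the chain rule, bound each conditional factor by $e^{\eps_i}$ using the per-mechanism guarantee with the prefix held fixed, multiply to get $e^{\sum_i \eps_i}$, and then sum/integrate over the output set. Your observation that the adaptive form of composition is what Algorithm 1 actually requires (because pattern removal makes each walk's output space depend on earlier samples) is apt and goes slightly beyond what the paper makes explicit.
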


Equipped with the results in previous lemmas, we are able to provide
the privacy guarantee for Algorithm 1.
\begin{theorem}\label{thm:main-privacy}
Algorithm 1 satisfies $\eps$-differential privacy.
\end{theorem}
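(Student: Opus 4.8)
The plan is to combine the two privacy guarantees established for the sampling and perturbation phases, and then invoke the composition lemma (Lemma~\ref{thm:epscomp}) to conclude that the overall algorithm is $\eps$-differentially private. The sampling phase is the dominant part: Algorithm~1 runs the MH-based random walk $k$ times, and each individual application of the exponential mechanism (via procedure $\mathcal{A}(\D)$) is intended to carry a privacy budget of $\eps_1/k$. The perturbation phase in Line~9 adds $Lap(k/\eps_2)$ noise to each of the $k$ reported supports, and since the support has sensitivity~$1$, each such Laplace release is $\frac{\eps_2}{k}$-differentially private.

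First I would argue that each of the $k$ random-walk samples is $\frac{\eps_1}{k}$-differentially private. If the chain has truly converged, the sample is drawn from the exact target distribution $\pi(\D,\cdot)\propto\exp(\eps_1 u(\D,\cdot)/2k\Delta u)$, so by Theorem~\ref{thm:exp-mech} (with privacy parameter $\eps_1/k$) the draw is strictly $\frac{\eps_1}{k}$-differentially private. The subtlety, and the step I expect to be the main obstacle, is that in practice we do not sample from $\pi$ exactly but from an approximation $\hat\pi$, so the clean $\eps$-differential privacy claim of this theorem only holds in the idealized steady-state limit. This is precisely why Lemma~\ref{thm:error} was established: for a finite simulation with total variation distance $\theta$, each sample is only $(\frac{\eps_1}{k},\delta)$-differentially private. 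To make the unqualified $\eps$-DP statement of Theorem~\ref{thm:main-privacy} precise, I would invoke Lemma~\ref{thm:asympto}, which guarantees $\theta\to 0$ as the simulation time grows; in the steady state $\theta=0$, hence $\delta=\theta(1+e^{\eps_1/k})=0$, and each sample recovers pure $\frac{\eps_1}{k}$-differential privacy. The theorem should therefore be read as the steady-state guarantee, with the finite-time $(\eps,\delta)$ relaxation handled separately by Lemma~\ref{thm:error}.

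Next I would handle composition in two stages. Applying Lemma~\ref{thm:epscomp} to the $k$ successive samples, each $\frac{\eps_1}{k}$-differentially private, yields that the entire sampling phase is $k\cdot\frac{\eps_1}{k}=\eps_1$-differentially private. I must also check that removing a sampled pattern from the output space between walks (the pattern-removal step) does not break the composition argument: this adjustment depends only on previously released patterns, not on the private dataset $\D$, so it is a post-processing step and preserves the per-sample guarantees. Similarly, the $k$ Laplace perturbations each cost $\frac{\eps_2}{k}$ and compose to $\eps_2$-differential privacy for the perturbation phase.

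Finally, treating the sampling phase ($\eps_1$-DP) and the perturbation phase ($\eps_2$-DP) as two algorithms run on the same data, a last application of the sequential composition lemma gives total budget $\eps_1+\eps_2$. Since the algorithm's input specifies $\eps=\eps_1+\eps_2$, Algorithm~1 is $\eps$-differentially private, which is the desired conclusion. The only genuinely delicate point throughout is the reconciliation of the exact exponential-mechanism privacy (Theorem~\ref{thm:exp-mech}) with the approximate MCMC sampling; everything else is a routine budget-accounting exercise via the composition lemma.
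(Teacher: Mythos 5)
Your proposal is correct and follows essentially the same route as the paper: invoke Lemma~\ref{thm:asympto} to drive $\theta$ (and hence $\delta$ in Lemma~\ref{thm:error}) to zero at steady state so that each sample is $\frac{\eps_1}{k}$-differentially private, compose the $k$ samples to get $\eps_1$ for the sampling phase, compose the $k$ Laplace releases to get $\eps_2$ for the perturbation phase, and add the budgets to conclude $\eps$-differential privacy. Your extra remark that the pattern-removal step is data-independent post-processing is a point the paper leaves implicit, but it does not change the structure of the argument.
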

\begin{proof}
%First, since the score $u(\D,x)$ of any pattern $x$ is just its support
%$|gid(x)|$ in $\D$, the sensitivity of the score function $\Delta u$ is %one.
According to Lemma~\ref{thm:asympto}, when the chain has reached the
steady state, $\theta$ in Lemma~\ref{thm:error} becomes zero, giving
$\frac{\eps_1}{k}$-differential privacy in each output pattern. Using the
composition lemma, the sample phase satisfies $\eps_1$-differential privacy
as a whole. In the perturbation step, we add Laplace noise $Lap(k/\eps_2)$
independently on each of the true supports of the $k$ patterns. Again by
Lemma~\ref{thm:epscomp}, the perturbation phase gives $\eps_2$-differential
privacy. Therefore the entire Algorithm 1 achieves $\eps$-differential privacy
since $\eps=\eps_1+\eps_2$.
\end{proof}

\subsection{Utility Analysis}\label{subsec:utility}
Because neighboring inputs must have similar output under differential privacy,
a private algorithm usually does not return the exact answers. In the scenario of mining
top-$k$ frequent patterns, the {\em Diff-FPM} algorithm should return a noisy list of patterns
which is close to the real top-$k$ patterns. To quantify the quality of the output of {\em Diff-FPM},
we first define two utility parameters, following \cite{Thakurta2010}.
Recall that $f$ is the support of the $k$th frequent pattern, and let $\beta$
be an additive error to $f$. Given $0<\gamma<1$, we require that with
probability at least $1-\gamma$, (1) no pattern in the output has true support
less than $f-\beta$ and (2) all patterns having support greater than $f+\beta$
exist in the output.
The following theorems provide the utility guarantee of {\em Diff-FPM}.
A score function $u(x)=|gid(x)|$ is assumed.

\begin{theorem}
At the end of the sampling phase in Algorithm 1, for all $0<\gamma<1$,
with probability at least $1-\gamma$,
all patterns in set $S$ have support greater than $f-\beta$, where $\beta=
\frac{2k}{\eps_1}(\ln(k/\gamma)+\ln M)$ and $M$ is an upper bound on the size
of output space.
\end{theorem}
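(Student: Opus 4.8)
The plan is to analyze a single iteration of the sampling loop as one application of the exponential mechanism with per-sample budget $\eps' = \eps_1/k$ and sensitivity $\Delta u = 1$ (which holds for $u(x)=|gid(x)|$, as noted in the score-function discussion), and then to combine the $k$ iterations by a union bound. For a single draw, the target distribution assigns each surviving pattern $x$ a probability proportional to $\exp(\eps' u(x)/2)$. Fixing an iteration, I would partition the current output space into the ``bad'' patterns $B=\{x:|gid(x)|<f-\beta\}$ and the rest, and bound the probability that the drawn sample lands in $B$.

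First I would upper-bound the numerator: every $x\in B$ contributes at most $\exp(\eps'(f-\beta)/2)$, and since $M$ upper-bounds $|\X|$ we have $|B|\leq M$, so $\sum_{x\in B}\exp(\eps' u(x)/2)\leq M\exp(\eps'(f-\beta)/2)$. Next I would lower-bound the denominator by exhibiting at least one surviving pattern of support $\geq f$, which contributes $\exp(\eps' f/2)$ to the normalizing sum. Dividing gives
$$\Pr[\text{sampled pattern}\in B]\;\leq\; M\,\exp\!\left(-\frac{\eps'\beta}{2}\right)\;=\;M\,\exp\!\left(-\frac{\eps_1\beta}{2k}\right).$$
Setting the right-hand side equal to $\gamma/k$ and solving for $\beta$ yields exactly $\beta=\frac{2k}{\eps_1}(\ln(k/\gamma)+\ln M)$, after which a union bound over the $k$ draws bounds the total probability that any output pattern is bad by $\gamma$, as claimed.

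The main obstacle is justifying the denominator lower bound uniformly across all $k$ iterations, because after each sample the chosen pattern is deleted from the output space (the pattern-removal step), so the normalizing sum changes from iteration to iteration. The key observation I would use is that $f$ is by definition the support of the $k$-th most frequent pattern, so at least $k$ patterns have support $\geq f$; at the $i$-th iteration at most $i-1\leq k-1$ patterns have been removed, leaving at least $k-(i-1)\geq 1$ patterns of support $\geq f$ still present. Hence $\sum_y\exp(\eps' u(y)/2)\geq\exp(\eps' f/2)$ survives every iteration, and the per-iteration failure probability $M\exp(-\eps_1\beta/(2k))\leq\gamma/k$ holds throughout. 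A secondary point to address is that this argument treats each draw as an exact sample from $\pi$; consistent with Theorem~\ref{thm:main-privacy}, I would state the bound under the assumption that the chain has converged, so that $\hat\pi=\pi$ and the exponential-mechanism sampling probabilities apply verbatim.
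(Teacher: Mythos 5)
Your proposal is correct and follows essentially the same route as the paper's proof: bound the per-round probability of sampling a pattern of support below $f-\beta$ by $M\exp(-\eps_1\beta/2k)$ (ratio of exponential-mechanism weights against a surviving pattern of support $\geq f$, times at most $M$ bad patterns), then union-bound over the $k$ rounds and solve $kM\exp(-\eps_1\beta/2k)\leq\gamma$ for $\beta$. Your explicit justification that a pattern of support $\geq f$ survives every iteration (since at least $k$ such patterns exist and at most $k-1$ are removed), and your remark that the bound assumes the chain has converged so that $\hat\pi=\pi$, make precise two points the paper leaves implicit.
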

\begin{proof}
In any of the $k$ rounds of sampling, the probability of choosing a pattern
with support $f-\beta$ given that a pattern having support $\geq f$ is still
present is at most $e^{\frac{\eps_1(f-\beta)}{2k}}/e^{\frac{\eps_1 f}{2k}}
=\exp(-\eps_1 \beta/2k)$. Although the size $m$ of the output space is unknown
without enumeration, one can usually get an upper bound $M$ without considering
the isomorphism classes. Since there are at most $M$ patterns with support less
than $f-\beta$, after $k$ rounds of sampling the probability is upper bounded
by $kM\exp(-\eps_1 \beta/2k)$. Then
\begin{align}
\gamma &\geq kM\exp(-\eps_1 \beta/2k)\nonumber\\
\Leftrightarrow \beta &\geq \frac{2k}{\eps_1}\ln (kM/\gamma) \nonumber
\end{align}
\end{proof}
The following theorem provides the upper bound of noise added to the true
support of each output pattern.
\begin{theorem}
For all $0<\gamma<1$, with probability of at least $1-\gamma$, the noisy
support of a pattern differs by at most $\beta$, where $\beta=\frac{k}{\eps_2}
\ln(1/\gamma)$.
\end{theorem}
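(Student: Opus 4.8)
The plan is to observe that this theorem concerns only the perturbation phase of Algorithm~1 (Line~9), where the true support of each reported pattern is perturbed independently by adding noise drawn from $Lap(k/\eps_2)$. Thus for any fixed output pattern, the difference between its noisy support and its true support is exactly a single Laplace random variable $N \sim Lap(k/\eps_2)$, and the quantity I must bound is $|N|$. So the proof reduces entirely to a one-sided-free tail estimate for the Laplace law.

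First I would recall the standard tail bound for the Laplace distribution: if $N \sim Lap(\lambda)$ has density $\frac{1}{2\lambda}e^{-|x|/\lambda}$, then integrating both tails gives
\begin{equation}
\Pr[\,|N| > t\,] = e^{-t/\lambda}, \qquad t \geq 0. \nonumber
\end{equation}
Substituting $\lambda = k/\eps_2$ and setting the right-hand side equal to $\gamma$, I would solve $e^{-t\eps_2/k} = \gamma$ for the threshold, yielding $t = \frac{k}{\eps_2}\ln(1/\gamma)$. Taking $\beta = \frac{k}{\eps_2}\ln(1/\gamma)$ then gives $\Pr[\,|N| > \beta\,] = \gamma$, hence $\Pr[\,|N| \leq \beta\,] \geq 1-\gamma$, which is precisely the claim.

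The computation itself is routine; there is no real analytic obstacle, so the only points requiring care are matters of scope and bookkeeping. Because the bound carries $\ln(1/\gamma)$ rather than $\ln(k/\gamma)$, it is a per-pattern guarantee for a single (arbitrary) output pattern, so no union bound over the $k$ sampled patterns is invoked here. The one thing I would double-check is that the noise parameter in Line~9 is indeed $k/\eps_2$ (so that the per-pattern budget is $\eps_2/k$ and the composition over $k$ patterns consumes the full $\eps_2$, consistent with Theorem~\ref{thm:main-privacy}); once that is confirmed, the tail estimate above applies directly and completes the argument.
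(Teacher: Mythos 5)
Your proof is correct and follows essentially the same route as the paper's: both integrate the two tails of the $Lap(k/\eps_2)$ density to get $\Pr[|N|>\beta]=\exp(-\beta\eps_2/k)$ and solve for $\beta$ at failure probability $\gamma$. Your added remarks (that this is a per-pattern bound with no union bound, and that the noise scale $k/\eps_2$ matches Line~9) are accurate bookkeeping consistent with the paper.
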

\begin{proof}
This is a property directly followed by integrating the Laplace distribution:
$\gamma \leq 2\int_{\beta}^{\infty}\frac{\eps_2}{2k}\exp(\frac{-\tau \eps_2}{k})d\tau=
\exp(\frac{-\beta\eps_2}{k})$, which transforms to $\beta\leq\frac{k}{\eps_2}
\ln(1/\gamma)$.
\end{proof}

\section{Experimental Study}\label{sec:exp}

In this section, we evaluate the performance
of {\em Diff-FPM} through extensive experiments on various datasets. 
Since this is the first work on differetially private mining
of frequent graph patterns, the quality of the output is compared with
the result from a non-private FPM algorithm and the accuracy is reported.
In addition, we demonstrate the effectiveness of the EEN algorithm by
comparing the time cost per iteration to two basic methods.
We also discuss the running time and scalability of {\em Diff-FPM}
and the impact
of various parameters 
such as privacy budget, the number of output patterns and the size of
the graph dataset.
%on the quality of the graph patterns mined, as well
%as the efficiency of the MCMC sampling process. Specifically, we define
%two utility metrics and study the impact of the privacy budget, the number
%of patterns outputted and the size of the graph dataset. 
In this section
we consider the scenario of mining the top-$k$ frequent patterns.

\subsection{Experiment Setup}

\para{Datasets.}
%We use both synthetic and real datasets to evaluate our algorithm.
The following three datasets are used in our experiment:
{\em DTP} is a real dataset containing 
DTP AIDS antiviral screening
dataset\footnote{\footnotesize \url{http://dtp.nci.nih.gov/docs/aids/aids_data.html}},
which is frequently used in frequent graph pattern mining study.
It contains 1084 graphs, with an
average graph size of 45 edges and 43 vertices. There are 14 unique node
labels and all edges are considered having the same label.

The {\em click} dataset
consists of 20K small tree graphs (4 nodes and 3 edges on average) obtained by a graph generator
developed by Zaki \cite{zaki2005efficiently}. To a certain extent, this
synthetic dataset simulates user click graphs from web server
logs \cite{zaki2005efficiently}, which
is a suitable type of data requiring privacy-preserving mining.
All the tree graphs in this dataset are sampled from a master tree.
In our experiment the master tree has 10,000 nodes
with a depth of 10 and a fanout of 6.

The above two datasets contain graphs that are relatively sparse.
To test our algorithm on {\em dense} graphs, 
we also use a dataset 
containing 5K graphs, in which
the average node degree is 7. Each graph contains 10 vertices and 35 edges
on average.
The graph generator \cite{graphgen} we use is specially designed for generating
graph datasets for evaluation of 
frequent subgraph mining algorithms. The size of this graph dataset
is comparable to the largest datasets used in previous works \cite{yan2002gspan,inokuchi2000apriori}.

\para{Utiliy metrics.}
We evaluate the quality of the output of {\em Diff-FPM} by employing the
following two utility metrics: 
\begin{itemize}
\item {\em Precision.}
Precision is defined as the fraction of 
identified top $k$ graph patterns that are in the actual top $k$,
i.e., 
\[Precision = \frac{|\textrm{True Positives}|}{k}\]
This is the complementary measure
of the false negative rate used in \cite{Thakurta2010}.
\item{\em Support Accuracy.}
The measure of precision reflects the percentage of desired/undesired
patterns in the output, yet it cannot indicate how good or bad the output patterns are in terms of their supports.
For example, if $f=1000$, it is much more undesirable
if a pattern with support 10 appears in the output compared to a pattern with support
980, even though the precision may be the same in these two cases.
We first define the {\em relative support error (RSE)} as
\[RSE = \frac{(S_{true}-S_{out})/k}{f}\]
where $S_{true}$ and $S_{out}$ are the sum of the supports of the real top-$k$ patterns and sum of the supports of the sampled patterns respectively. This measure reflects
the average deviation of an output pattern's support with respect to
the support threshold $f$.
In the plots, the {\em support accuracy} is reported, which equals
$1-\textit{RSE}$.
\end{itemize}

%The support accuracy introduced above is an evaluation
%of the quality of the samples from an oracle's point of view.

All experiments were conducted on a PC with 3.40GHz CPU with 8GB RAM.
The random walk in the {\em Diff-FPM} algorithm consumes 
only a small amount of memory 
due to its Markovian nature, i.e.,
earlier states in the walk do not need to be remembered.
We can, however, allocate extra memory to cache some of the patterns and their neighbors.
We implemented our algorithm in Python 2.7 with the JIT compiler PyPy\footnote{\url{http://pypy.org}}
to speed up. The default parameters of $\eps=0.5$ and $k=15$ were used unless
specified otherwise.
In the experiment we do not release the noisy supports of the patterns in
the output (line 9 in Algorithm 1), so all the privacy budget is used in
the sampling phase.

\subsection{Experiment Results}
\para{Comparison of neighbor exploration methods.}
In Section~\ref{sec:een} we proposed the EEN algorithm to efficiently explore
the neighborhood of a pattern. We now compare it with two other methods:
a {\em naive} approach which finds the support of each neighbor of the current pattern $x$
and a {\em basic} approach which uses the monotonic property of frequent patterns 
(see Section~\ref{sec:een-details}). Figure~\ref{fig:een} shows the average iteration time
in logarithm of the three methods over three datasets. In each iteration, a neighboring pattern is proposed
and then accepted or rejected according to the MH algorithm. Clearly, EEN takes significantly
less time in each iteration than the other methods in both datasets, reducing the iteration
time by at least an order of magnitude compared to the naive approach. Thus all
subsequent results are presented with EEN enabled.

\begin{figure}[t]
\centering
\includegraphics[width=0.35\columnwidth]{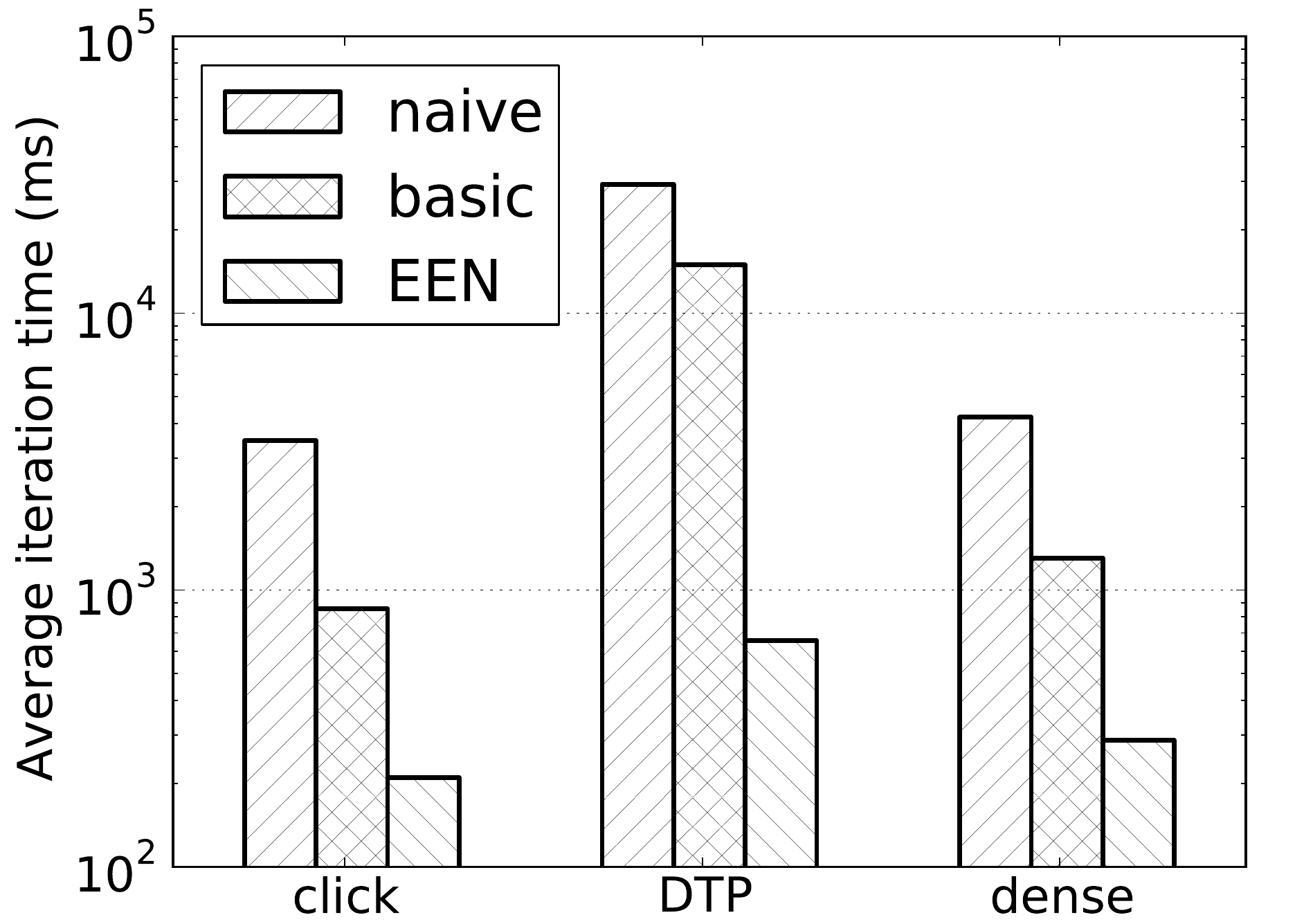}
\caption{Comparison of neighbor exploration methods}
\label{fig:een}
\end{figure}

\newlength{\figwidth}
\setlength{\figwidth}{0.33\textwidth}

\begin{figure*}[t]
\centering
\subfigure[]{\label{fig:n_pre}
\includegraphics[width=\figwidth]{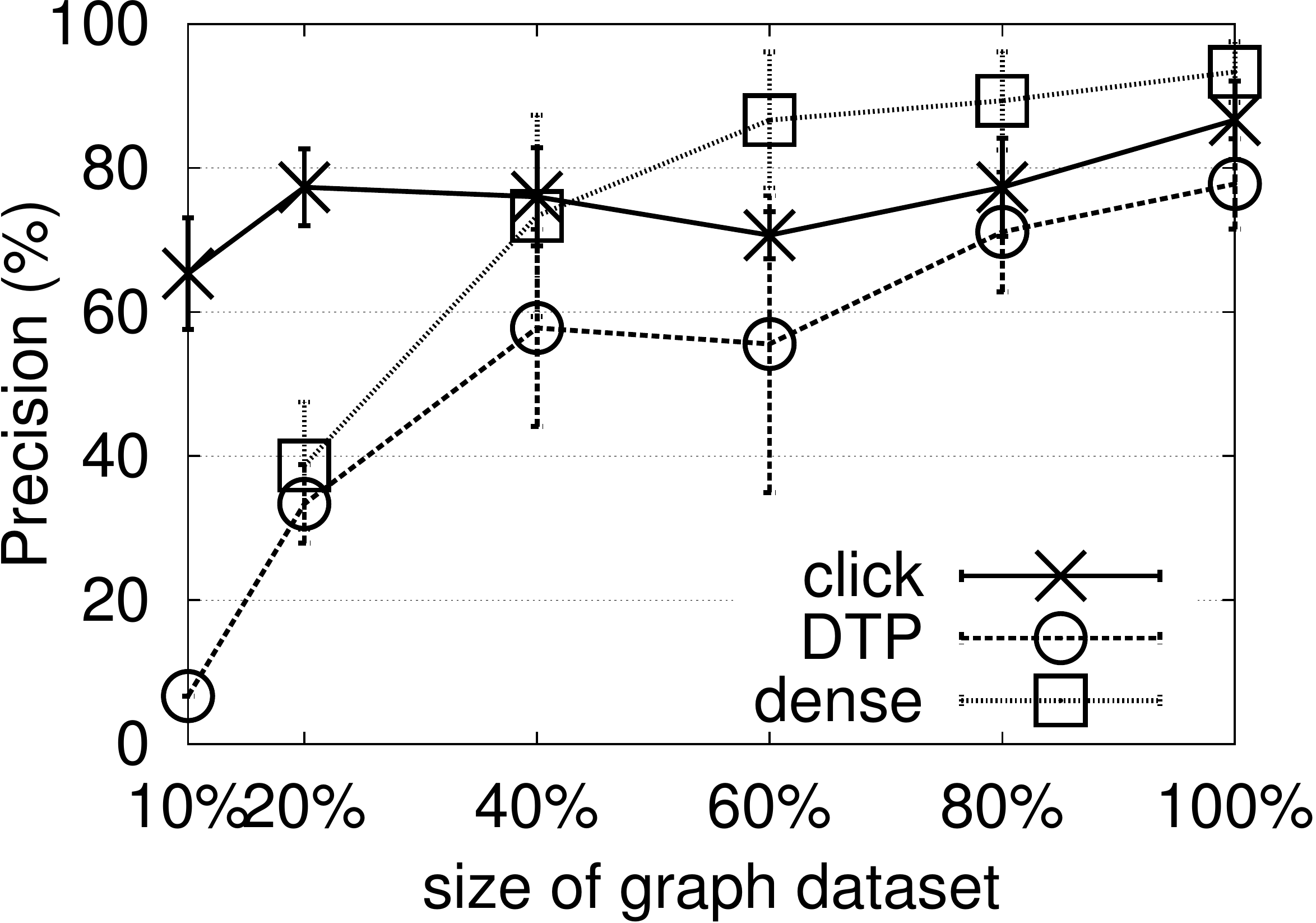}
}%
\subfigure[]{\label{fig:n_err}
\includegraphics[width=\figwidth]{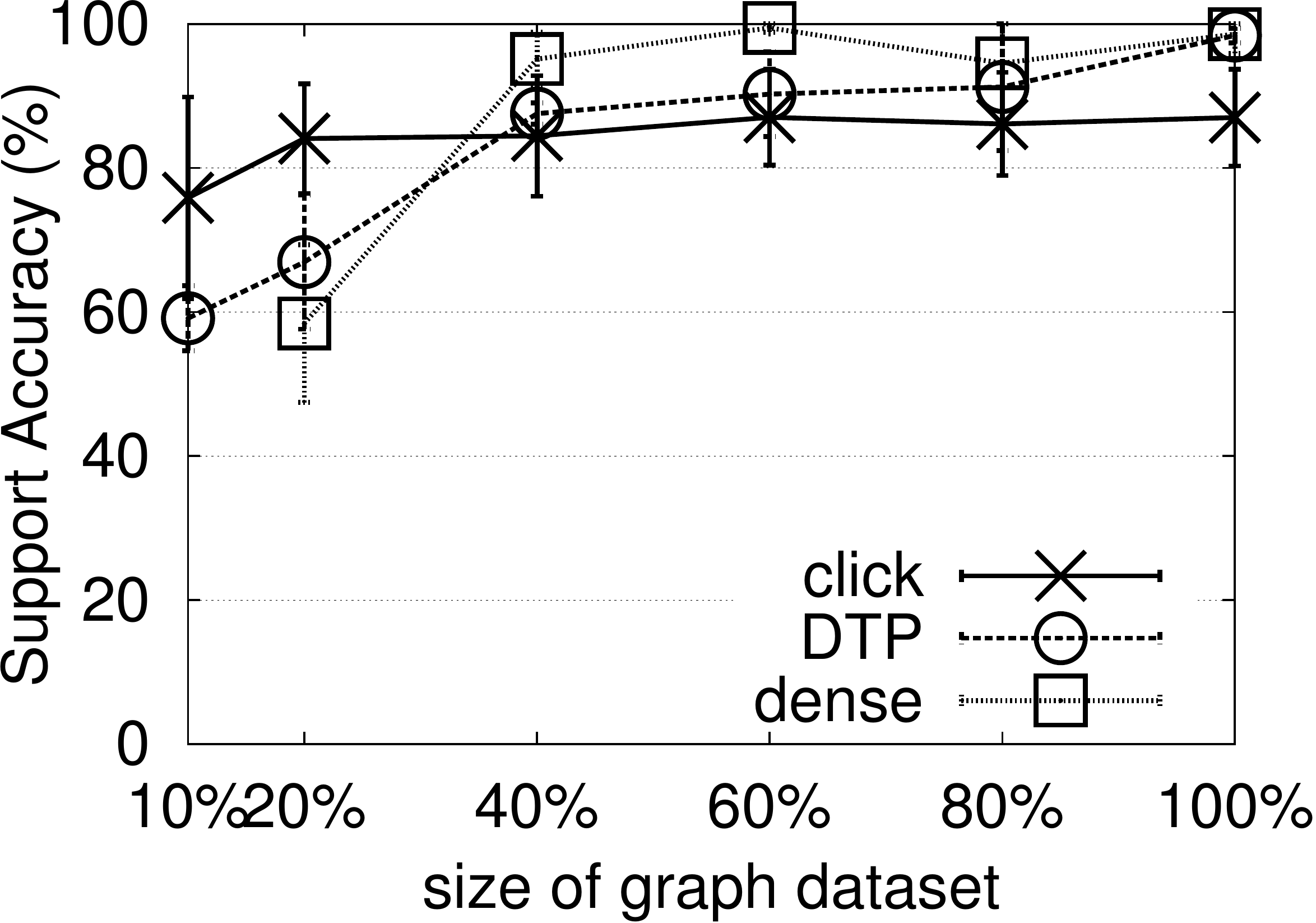}
}%
\subfigure[]{\label{fig:n_time}
\includegraphics[width=\figwidth]{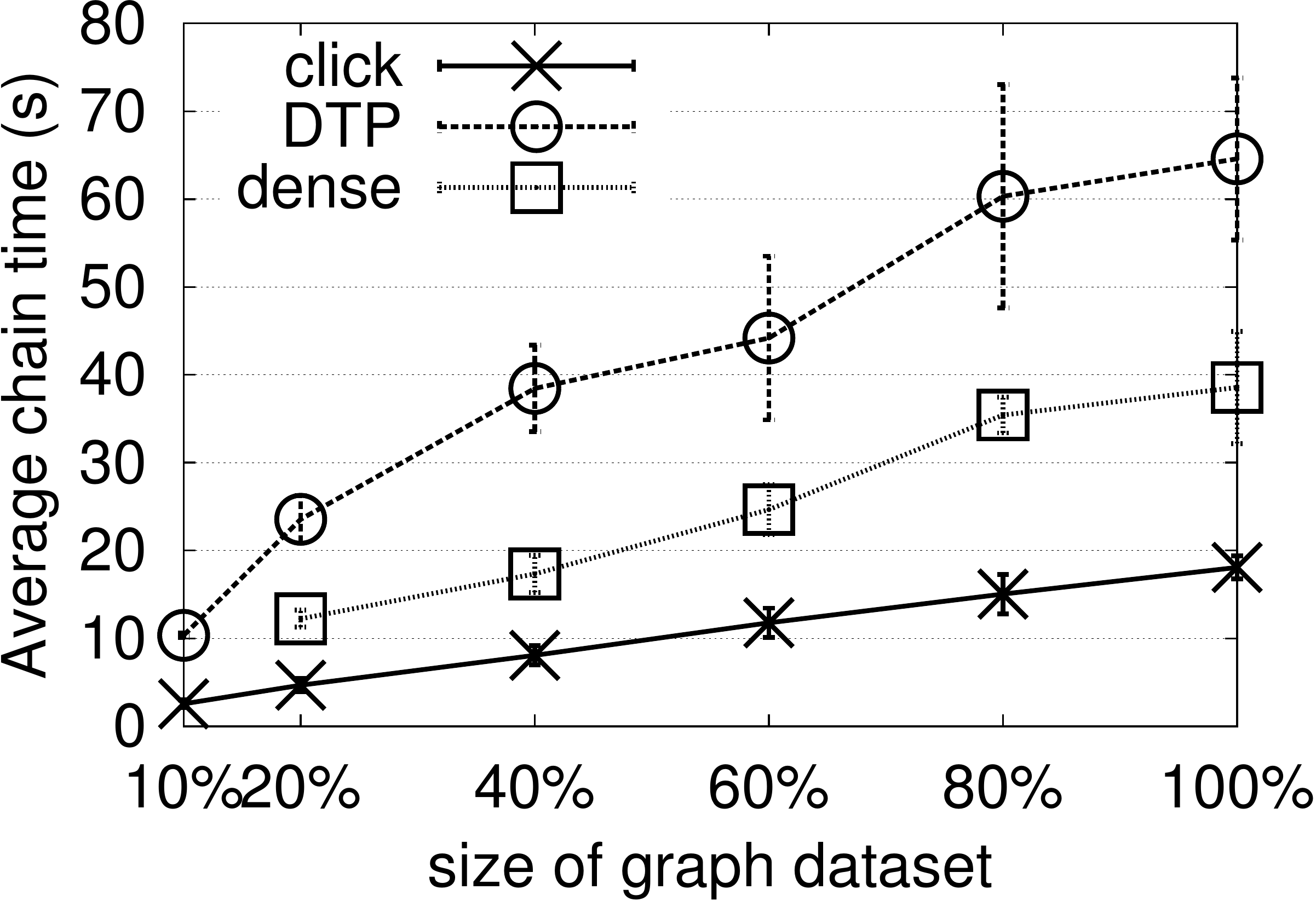}
}%
\caption{Impact of graph dataset size}
\label{fig:size}
\end{figure*}

\begin{figure*}[t]
\centering
\subfigure[]{\label{fig:score_pre}
\includegraphics[width=\figwidth]{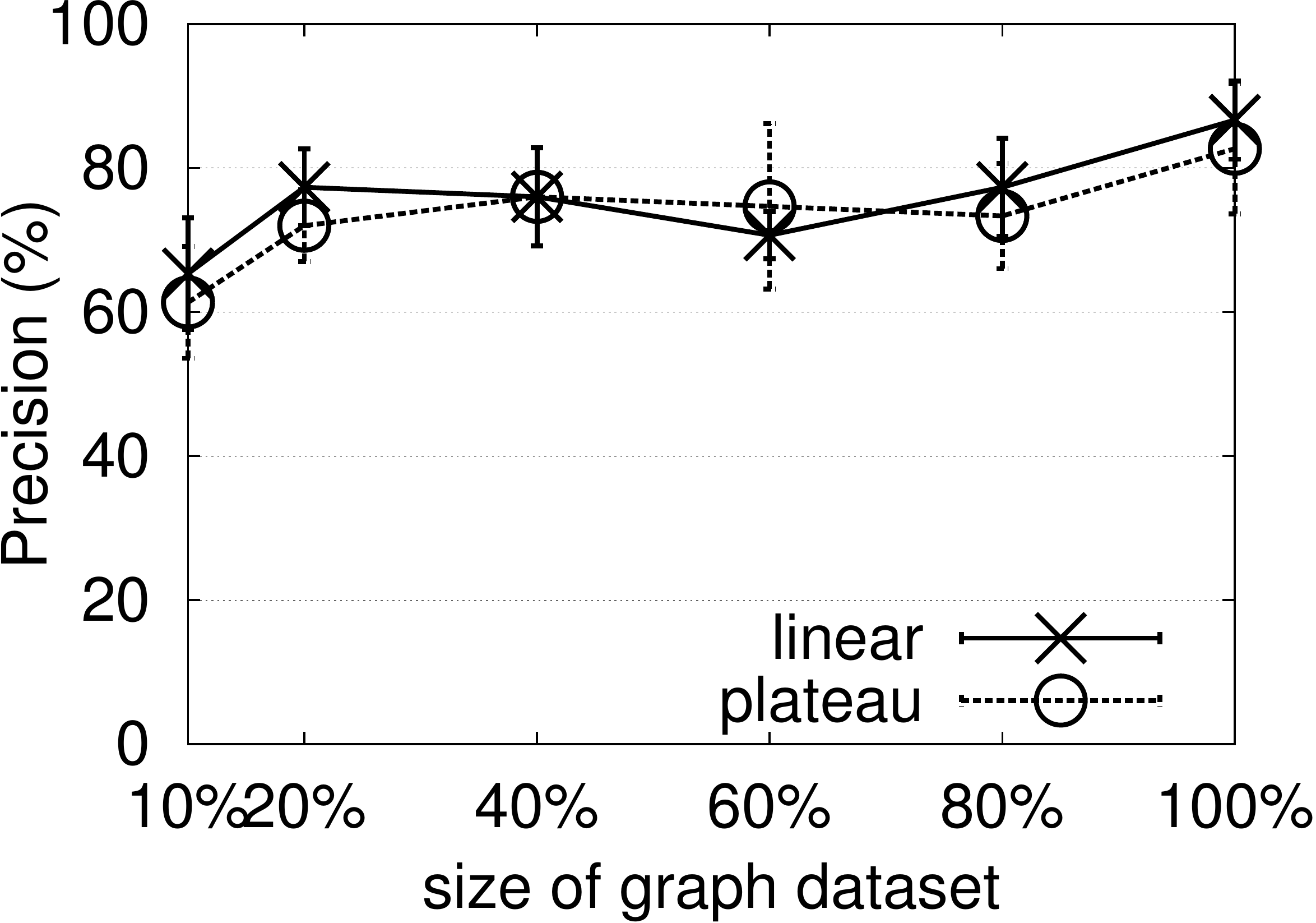}
}%
\subfigure[]{\label{fig:score_err}
\includegraphics[width=\figwidth]{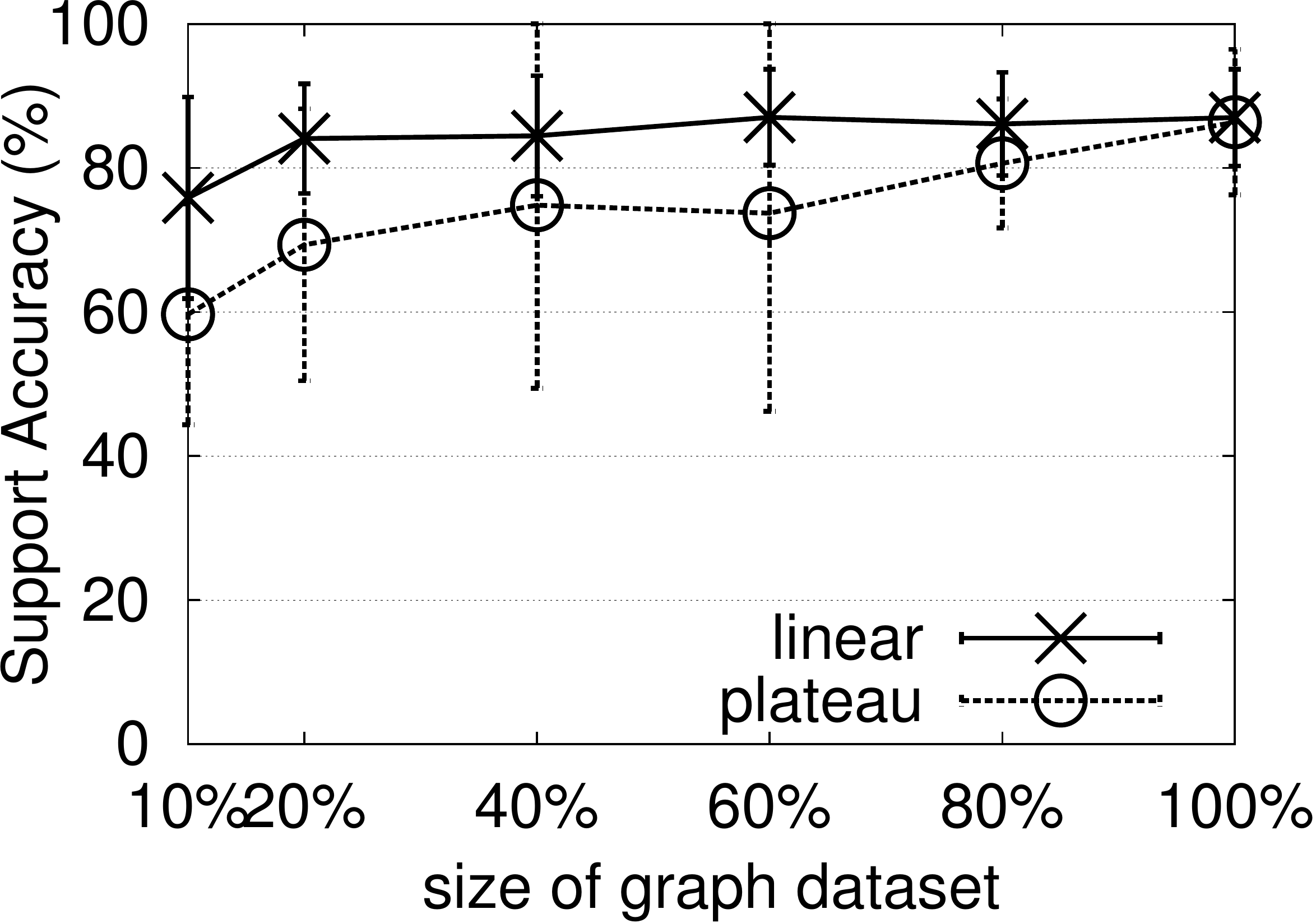}
}%
\caption{Score function}
\label{fig:score}
\end{figure*}

\begin{figure*}[t]
\centering
\subfigure[]{\label{fig:eps_pre}
\includegraphics[width=\figwidth]{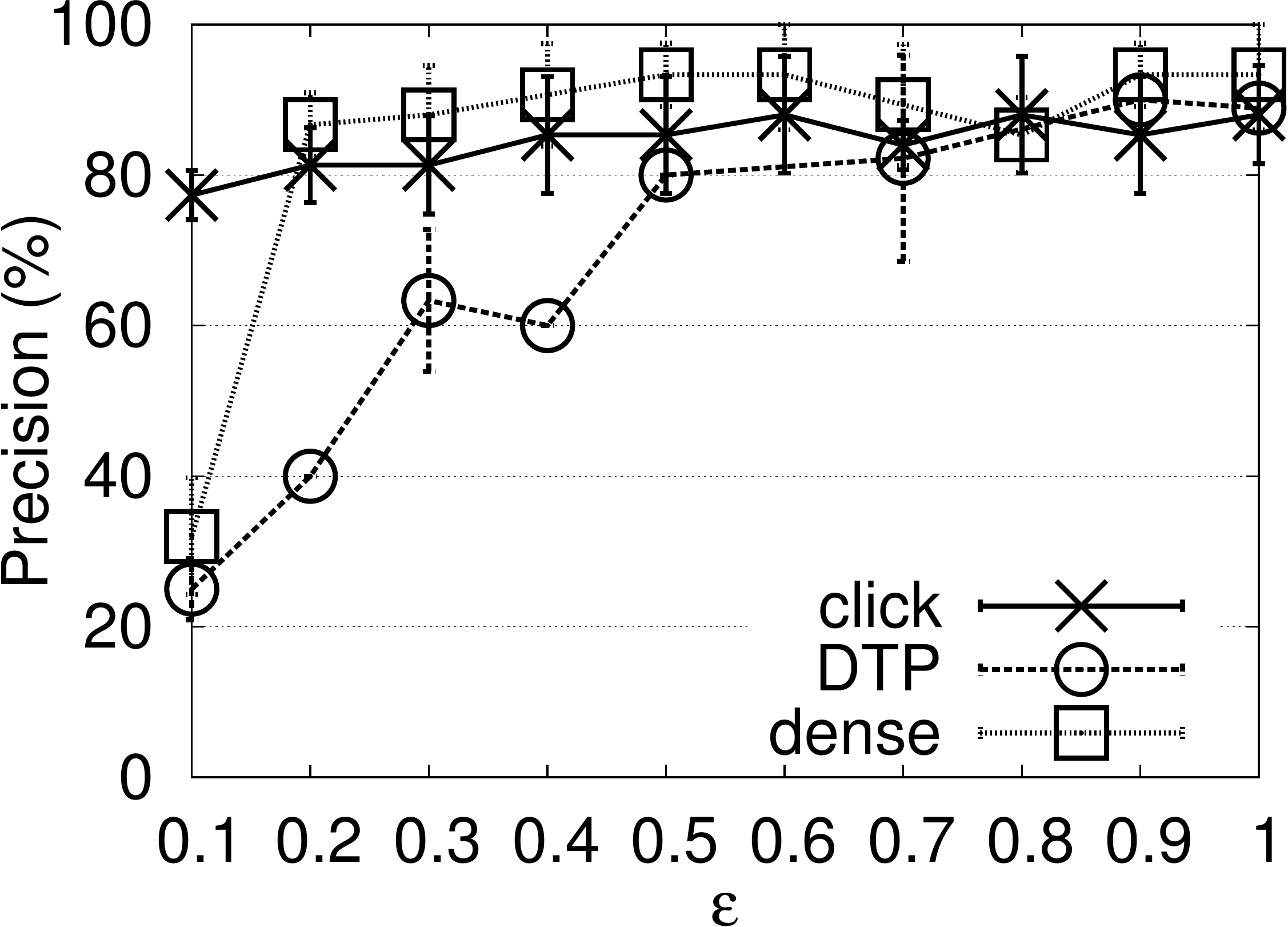}
}%
\subfigure[]{\label{fig:eps_err}
\includegraphics[width=\figwidth]{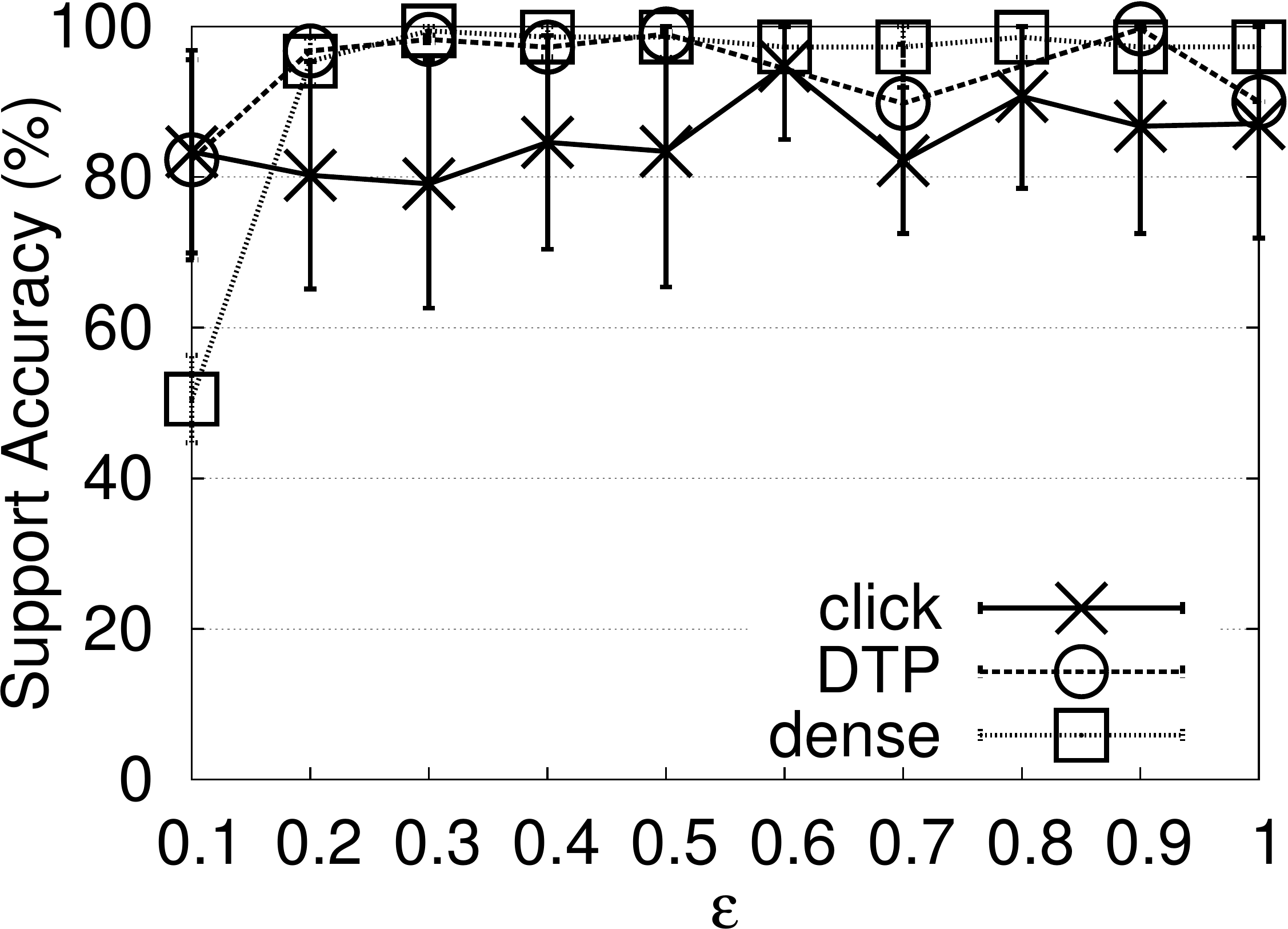}
}%
\caption{Precision and accuracy versus $\eps$}
\label{fig:eps}
\end{figure*}

\begin{figure*}[t]
\centering
\subfigure[]{\label{fig:k_pre}
\includegraphics[width=\figwidth]{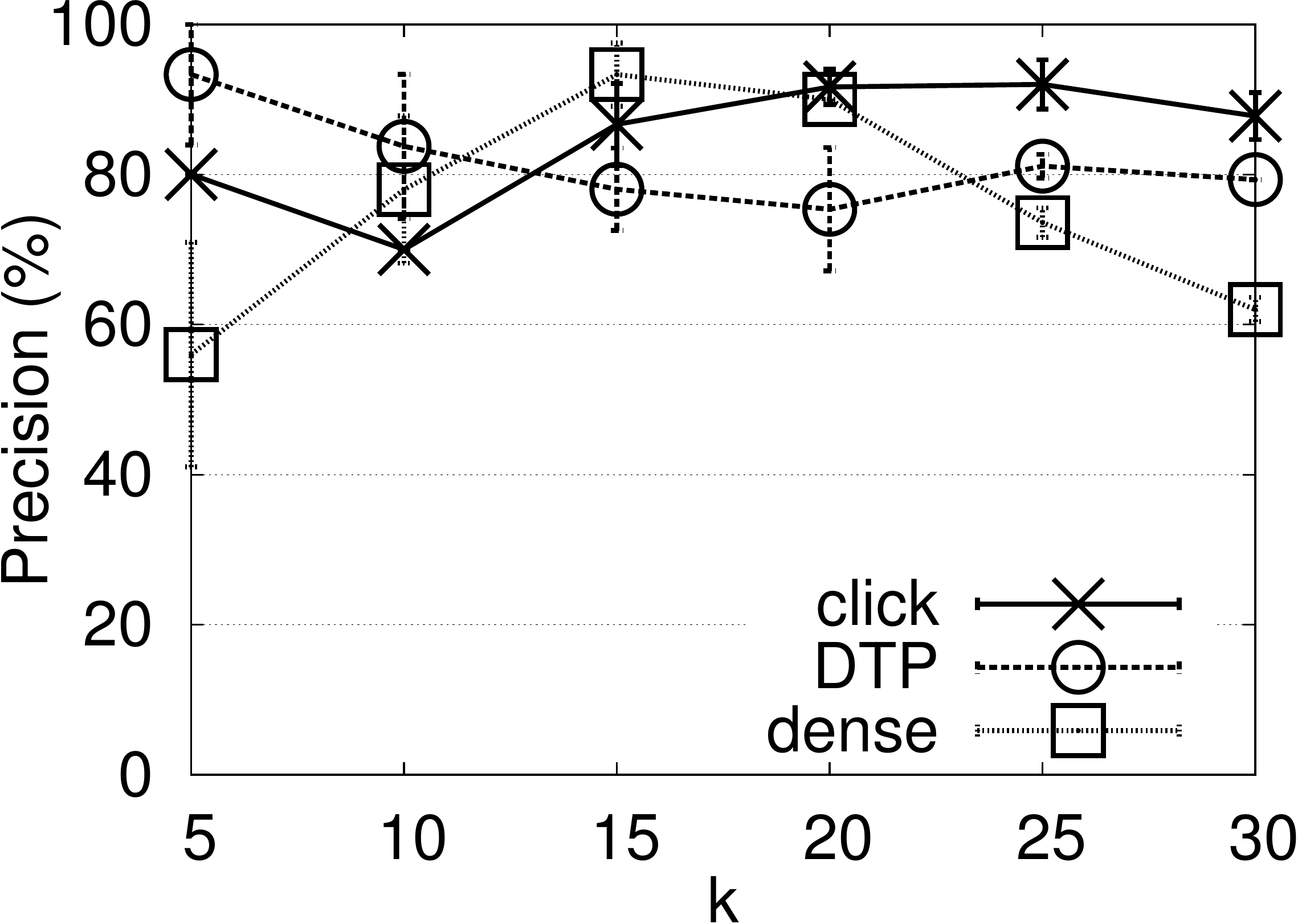}
}%
\subfigure[]{\label{fig:k_err}
\includegraphics[width=\figwidth]{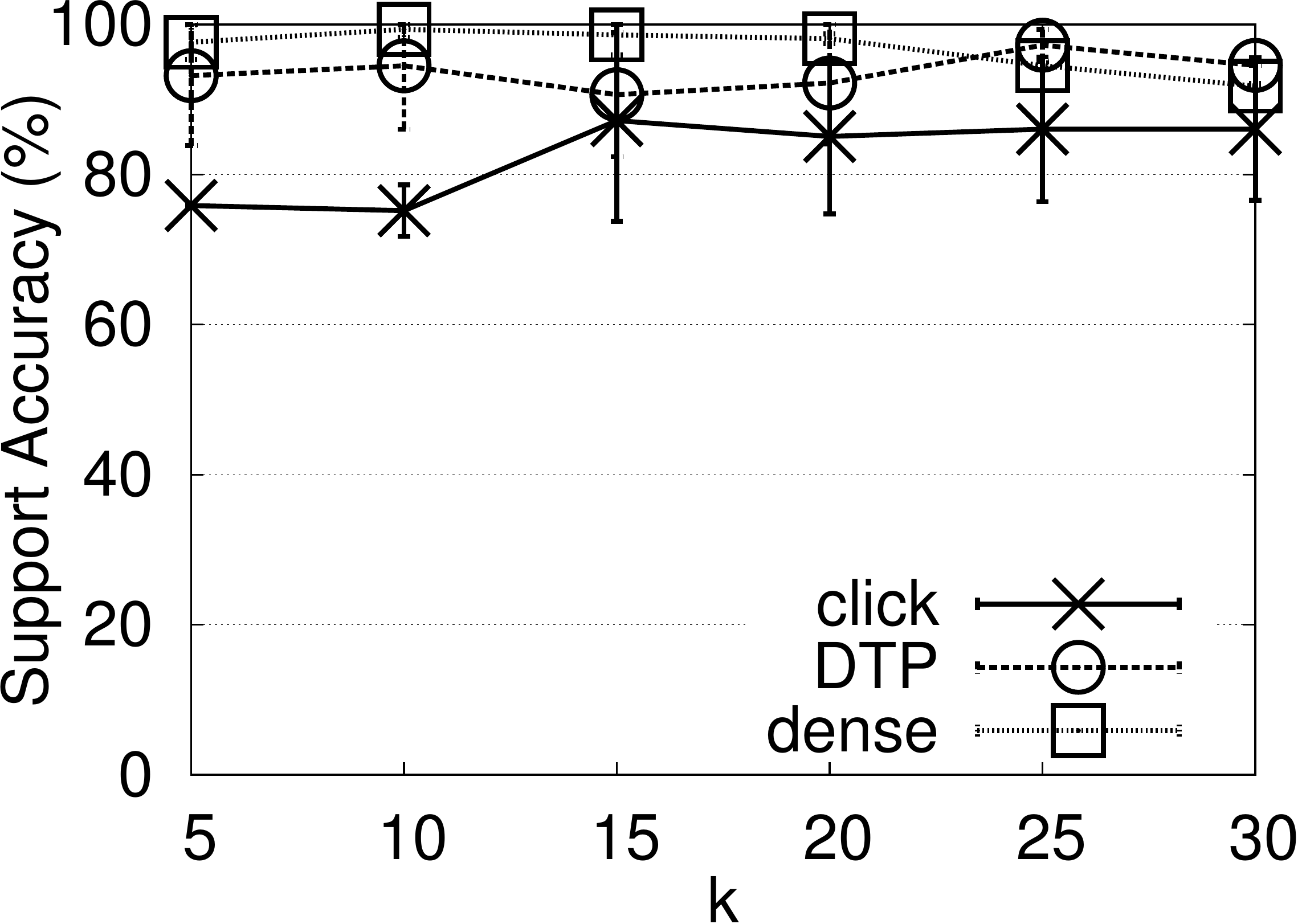}
}%
\caption{Precision and accuracy versus $k$}
\label{fig:k}
\end{figure*}

\para{Run time and scalability.}
Figure~\ref{fig:n_time} illustrates the average time taken to output one frequent pattern
as the size of the dataset increases. For the full datasets, {\em click} takes 20 seconds,
{\em DTP} takes about 1 minute and {\em dense} sits in the middle, although the {\em click} dataset contains 20K
graphs compared to only 1K in the {\em DTP}. It indicates that the size of each individual
graph and the size of the neighborhood have a larger impact on the run time than the total number of graphs in the dataset (note that {\em DTP} has 14 labels and thus a larger neighborhood of a pattern compared to {\em dense}).
For scalability, all datasets are observed to have linear scale-up in time as the size of graph dataset
increases.

\para{Precision and support accuracy.}
We now examine the quality of the output by studying the precision and support accuracy ({\em SA})
of the {\em Diff-FPM} algorithm under various parameter settings.

First, Figure~\ref{fig:n_pre} and Figure~\ref{fig:n_err} show the precision and {\em SA} when
we increase the size of the graph dataset from 10\% to 100\%
\footnote{\footnotesize The data point for {\em dense} at 10\% is absent since the smallest dataset size can be generated is 1K.}.
An increasing trend of the output quality can be clearly observed here. This is in line with
our expectation because achieving differential privacy is more demanding in a small dataset
-- the larger the number of records in the database, the easier it is to hide an individual record's
impact on the output. For all three full datasets, {\em Diff-FPM} is able to achieve at least
80\% on both precision and {\em SA}.

Figure~\ref{fig:eps_pre} shows the precision when varying privacy budget $\eps$.
With a very limited budget ($\eps=0.1$), only about 30\% of samples are
from the real top-$k$ patterns for {\em DTP} and {\em dense}. This is inevitable due to
the privacy-utility tradeoff.
%: a more stringent privacy requirement will degrade
%the quality of the outcome. 
As more privacy budget is given, the precision
of {\em Diff-FPM} increases fast. At $\eps=0.5$, the precisions from all
datasets have reached 80\%.
Further increase in privacy budget does not provide significant benefit on the
precision. We observed a similar trend in the support accuracy plot (Figure~\ref{fig:eps_err}),
with less dramatic changes for $\eps$ from 0.1 to 0.5.

Figures~\ref{fig:k_pre} and \ref{fig:k_err} illustrate the impact of the
number of patterns in the output.
Recall that in each round of sampling, a budget of $\eps/k$ is consumed
(cf. proof of Theorem~\ref{thm:main-privacy}).
Given a certain privacy budget, the more patterns
to output, the less privacy budget each sample can use. 
Thus we expect the
average quality of the output to drop as $k$ increases,
which is confirmed in the result.
%The result on the {\em DTP} and {\em dense} dataset confirms the above argument. 
%We give an explanation of the increasing trend on the {\em click} dataset in \cite{arxiv}.
Meanwhile, the support accuracy of the output holds well with the increasing number
of output, which can be seen in Figure~\ref{fig:k_err}.

\begin{figure}[t]
\centering
\subfigure[Accept rate vs $\eta$]{\label{fig:eta}
\includegraphics[width=0.35\columnwidth]{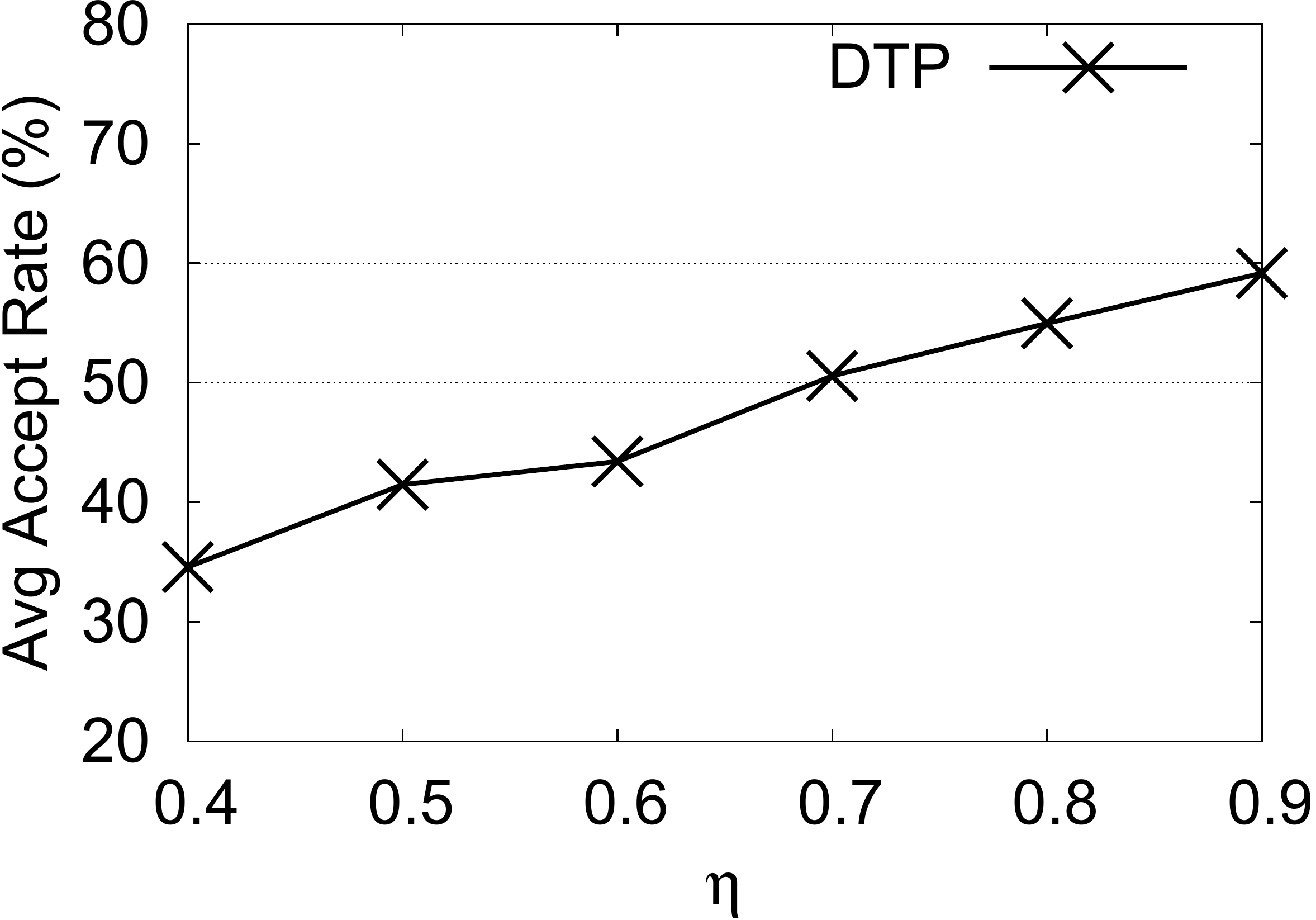}
}%
\subfigure[Accept rate vs $\rho$]{\label{fig:gamma}
\includegraphics[width=0.35\columnwidth]{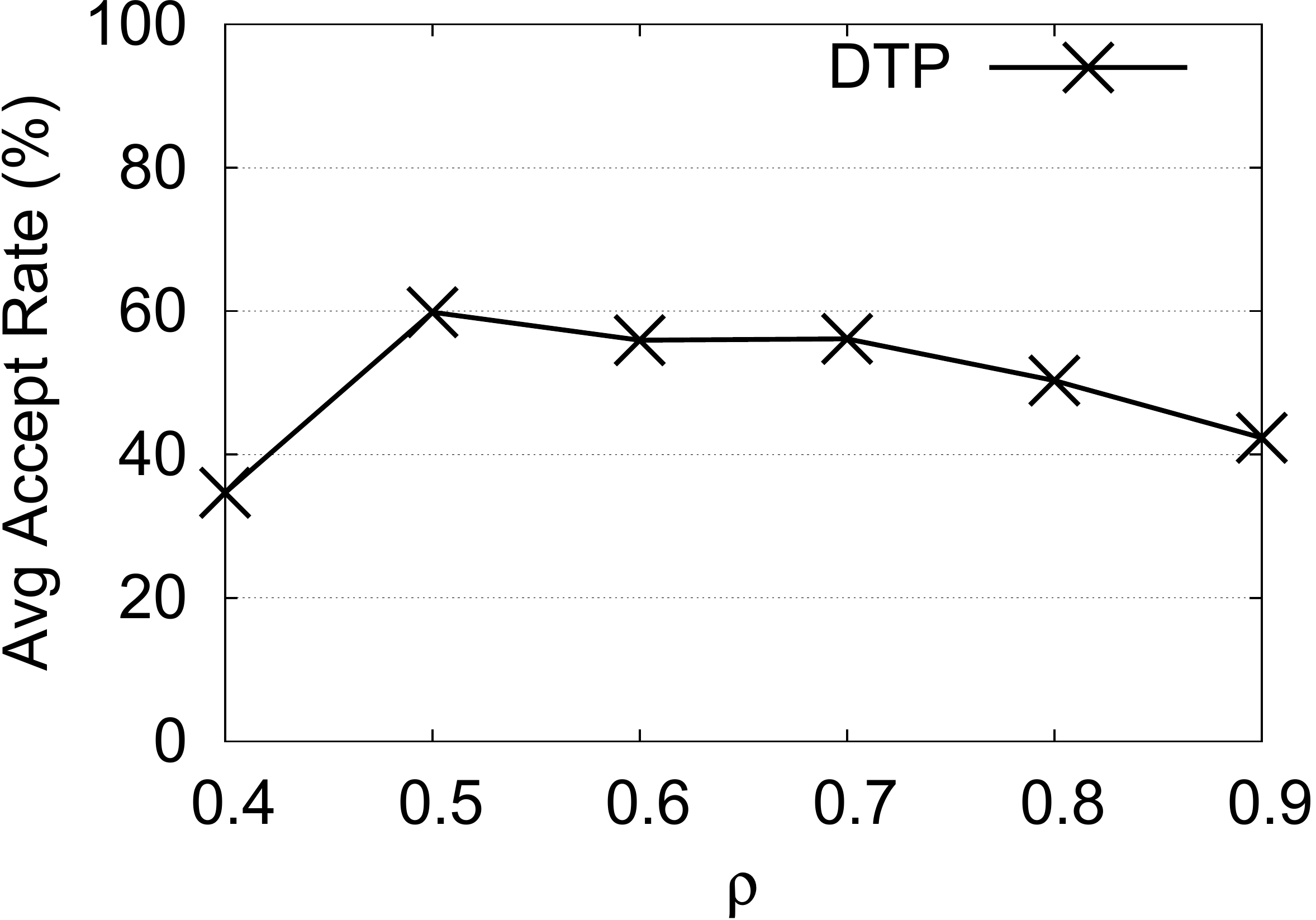}
}%
\caption{Impact of $\eta$ and $\rho$ on accept rate}
\label{fig:accept}
\end{figure}
\begin{figure}[t]
\centering
\includegraphics[width=0.5\columnwidth]{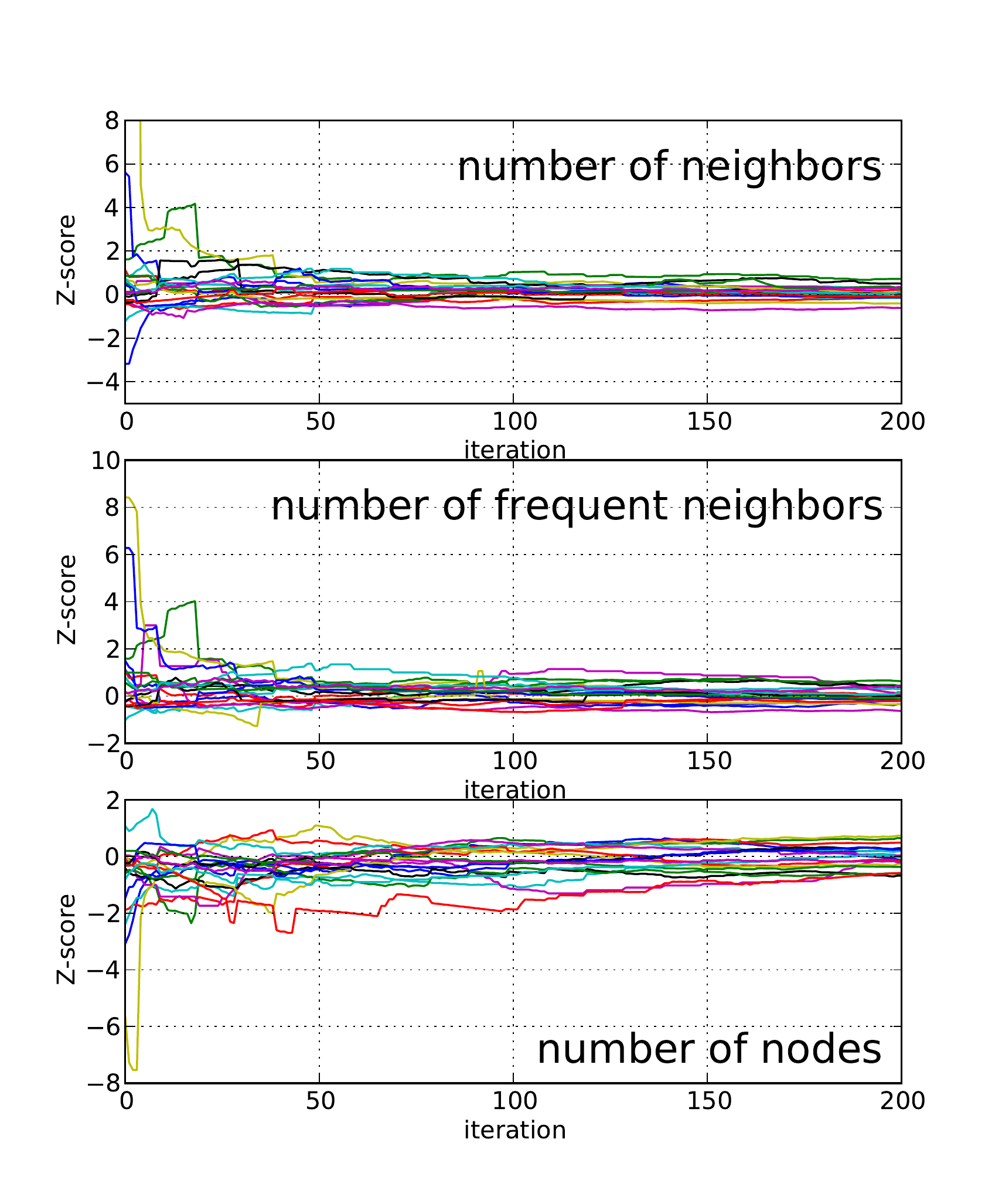}
\caption{Convergence trace of 20 chains}
\label{fig:converge}
\end{figure}

\para{Score function.}
In Section~\ref{sec:overview} we discussed the principles of designing the
score function. Here we experimentally compare several basic choices
on the synthetic dataset.
Figures \ref{fig:score_pre} and \ref{fig:score_err} show the precision and
support accuracy of two score functions {\em linear} and {\em plateau}.
{\em linear} represents the most straightforward choice:
$u(x)=|gid(x)|$ for any pattern $x$, with $\Delta u=1$.
{\em plateau} treats all the patterns in $\{x|gid(x)\geq f\}$ the same,
i.e., $u(x)=f$, if $|gid(x)|\geq f$; $u(x)=|gid(x)|$, if $|gid(x)|<f$.
The random walk with the {\em plateau} score function is able to traverse more
patterns in the POFG. However, as shown in the plots, this does not lead to
better precision and support accuracy in the result. Over the range of
different graph dataset sizes, the {\em linear} score function consistently
performs better due to the exponentially amplified probability mass for
more frequent patterns. Therefore we use the {\em linear} score function
for the rest of the experiment.

\para{Impact of proposal distribution.} Recall that two parameters have
impact on our proposal distribution (Section~\ref{sec:proposal}): $\eta$
balances the weight on frequent/infrequent neighbors and $\rho$ balances
the weight on sub-neighbors/super-neighbors within the frequent neighbors.
Note that the proposal
distribution does not affect the correctness of the MH sampling, but it does
affect the speed of convergence. Here the impact of $\eta$ is measured by
the average accept rate in the entire walk, i.e., the rate that a proposed
pattern is accepted on average.
Since frequent patterns have exponentially large probability mass to be
sampled, a larger value of $\eta$ should be desired. This is reflected in
Figure~\ref{fig:eta}, in which the average accept rate increases from about 35\%
when $\eta=0.4$ to more than 60\% at $\eta=0.9$. The other parameter $\rho$
controls the probability mass of sub-neighbors given that a frequent pattern
will be proposed. In graph pattern mining, the smaller graphs usually have
larger support. Therefore a $\rho$ of at least 0.5 is preferred, which can
be seen by the drop of average accept rate from 60\% to less than 40\% when
$\rho$ decreases from 0.5 to 0.4 in Figure~\ref{fig:gamma}. Interestingly,
as we deviate away from 0.5, the acceptance rate slowly drops and
adversely affects the sampling performance. This is because a balanced
sub-neighbor/super-neighbor proposal allows faster transition from one
pattern to another, making the chain well mixed instead of lingering in a
local region.

\para{Convergence analysis.} A decision we have to make is when to stop the
random walk and output a sample. 
%We should have enough iterations to ensure
%that we are at equilibrium. 
In Section~\ref{sec:converge} we introduced
Z-score based Geweke diagnostic,
which compares the distribution at the beginning and end
of the chain. Since MCMC is typically used to
estimate a function of the underlying random variable instead of structural
data like graphs, we need to choose some properties of the patterns
which we will monitor using the Geweke test. The three metrics we use in the
experiment are the number of neighbors $N(x)$, the number of frequent neighbors
$N_1(x)$ and the number of nodes in the pattern $|x|$.
Figure~\ref{fig:converge} shows the convergence traces of a sample run
with $K=20$ and $\eps=0.5$ on the {\em DTP} dataset.
Each curve corresponds to the Z-score of a chain over the number of iterations.
It can be seen that the Markov chain we design has pretty fast convergence
rate thanks to the tuning of the proposal distribution.
For each chain, convergence is declared when the Z-scores of
all three metrics have fallen within the $[-1,1]$ range for 20 iterations continuously.
In Figure~\ref{fig:converge}, this happens around 150 iterations for most chains.

\section{Related Work}\label{sec:related}
In a broad sense, our paper belongs to the general problem of
privacy-preserving data mining - a topic that has been studied
extensively for a decade because of its numerous applications
to a wide variety of problems in the literature. A general overview
of various research works on this topic can be found in \cite{aggarwal2008privacy}.
Below we briefly review the results relevant to this paper.

\para{Data Mining with Differential Privacy.}
Ever since differential privacy \cite{Dwork2006a} was proposed and embraced by the database
community, the privacy requirement that various works try to achieve has
shifted from syntactic models like $k$-anonymity \cite{sweeney02-2} to the more rigorous
model of differential privacy. A formal introduction to differential privacy
can be found in Section~\ref{subsec:dp}.
There exist two basic approaches to differentially
private data mining. In the first approach, the data owner releases an
anonymized version of the original dataset under differential privacy.
And the user has the freedom of conducting any data mining task on the anonymized
dataset. We call this the `publishing model'. Examples include releasing
anonymized version of contingency tables \cite{Barak2007,Xiao2010},
data cubes \cite{Ding2011} and spatial data \cite{Cormode2012}.
The general idea in these work is to release tables of noisy
counts (histograms) and study how to ensure they are
sufficiently accurate for different query workloads.
In the other approach, differential privacy is applied to a specific data
mining task, such as decision tree induction \cite{Friedman2010},
social recommendations \cite{machana2011personalized} and
frequent itemset mining \cite{Thakurta2010}.
The problem addressed in this paper falls into this category.
In these works, randomness is often injected
to the intermediate results or sub-procedures of a mining algorithm.
While the output of the first approach
is more versatile, the second approach often leads to better utility (for specific data mining tasks)
since privacy-preserving techniques are
particularly designed for that data mining algorithm.

\para{Privacy-Protection of Graphs.}
The aforementioned works on differentially private data mining all deal with
structured data (tables or set-valued data).
For graph data, there are research efforts \cite{aggarwal2008privacy} to
anonymize a social network graph to prevent node and edge re-identification.
But most of them focus on modifying the graph structure to satisfy
$k$-anonymity, which has been proved to be insufficient \cite{aggarwal2008privacy}.
Recently, several works \cite{Karwa2011,Hay2009} emerge to provide differentially private analysis of
graph data, which releases some statistics such as the number of triangles
about a single (large) graph. Two types of differential privacy have been
introduced to handle graph data: node differential privacy and edge differential
privacy. It is still open whether any nontrivial graph statistics can be
released under node differential privacy due to its inherent large sensitivity
(e.g., removing
a node in a star graph may result in an empty graph). Hay {\em et al.} \cite{Hay2009} consider the
problem of releasing the degree distribution of a graph under a variant of
edge differential privacy. More recently, Karwa {\em et al.} \cite{Karwa2011} propose algorithms
to output approximate answers to subgraph counting queries, i.e., given a
query graph $H$, returning the number of
edge-induced isomorphic copies of $H$ in the input graph. The technique they
use is to calibrate noise according to the smooth sensitivity \cite{Nissim2007} of $H$
in the input graph. Karwa et al. The cases when $H$ is triangle,
$k$-star or $k$-triangle are studied in \cite{Karwa2011}.
Unfortunately, their work does not support the case when $H$ is an arbitrary subgraph yet.

In contrast, we have a different problem setting from \cite{Karwa2011} in this paper. First, like
\cite{Thakurta2010}, our privacy-preserving algorithm is associated with a specific and more complicated data
mining task. Second, we consider a graph database containing a collection of graphs
related to individuals. The only work we can find on privacy protection for a
graph database is \cite{Li2011}, which follows the `publishing model'.
Their goal is to achieve $k$-anonymity by
first constructing a set of super-structures and then generating synthetic
representations from them.

\para{Graph Pattern Mining.} Finally, we briefly discuss relevant works
on traditional non-private graph pattern mining.
A more comprehensive survey can be found in \cite{Data2010}.
Earlier works which aim at finding all the frequent patterns in a graph database
usually explore the search space in a certain manner. Representative approaches
include {\em a priori}-based
(e.g. \cite{inokuchi2000apriori}) and pattern growth based
(e.g. gSpan \cite{yan2002gspan}). An issue with this
direction is that the search space grows exponentially with the pattern size,
which may reach a computation bottleneck.
Thus later works aim at mining
{\em significant} or {\em representative} patterns with scalability. One way of achieving
this is through random walk \cite{AlHasan2009}, which also motivates our use of MCMC sampling
for privacy preserving purpose. Another remotely related work is \cite{williams2010probabilistic},
which connects probabilistic inference and differential
privacy. It differs from this work by focusing on inferencing on the output of a
differentially private algorithm.

\section{Concluding Remarks}\label{sec:conclusion}
We have presented a novel technique for differentially private
mining of frequent graph patterns.
The proposed solution integrates the process
of graph mining and privacy protection into an MCMC sampling framework.
We have explored the design space of the proposal distribution
and the score function and their impact on the performance of the algorithm.
Moreover, we have established the theoretical privacy and utility guarantee
of our algorithm.
An efficient algorithm for counting the neighbors of a pattern has been
proposed to greatly reduce the time-consuming subgraph isomorphism tests.

Experiments on both synthetic and real datasets show that with moderate
amount of privacy budget, {\em Diff-FPM} is able to output frequent patterns
with over 80\% precision and support accuracy. We also notice the drop in
utility with the increase of the number of outputs or the decrease in dataset
size, which is inevitable under the requirement of differential privacy.

{
%\footnotesize
\bibliographystyle{abbrv}

}
%\newpage
\vspace{1.5em}
\end{document}